\newtheorem{definition}{Definition}
\newtheorem{theorem}{Theorem}
\newtheorem*{proposition*}{Proposition}
\newtheorem{corollary}{Corollary}
\newtheorem{lemma}{Lemma}
\newtheorem{remark}{Remark}
\begin{document}
%
\title{Privacy Amplification for Federated Learning via User Sampling and Wireless Aggregation}
%
%
%

\author{Mohamed~Seif~Eldin~Mohamed,~
        Wei-Ting~Chang,~
        and~Ravi~Tandon
\thanks{The authors are with the Department
of Electrical and Computer Engineering, The University of Arizona, Tucson,
AZ, 85721 USA e-mail: \{mseif, wchang\}@email.arizona.edu, tandonr@arizona.edu. Parts of this paper have been submitted to IEEE International Symposium on Information Theory (ISIT) 2021. This work has been supported in part by NSF Grants CAREER 1651492, CNS 1715947, and the 2018 Keysight Early Career Professor Award.}}
\maketitle


\vspace{-1in}
\begin{abstract}
In this paper, we study the problem of federated learning over a wireless channel with user sampling, modeled by a Gaussian multiple access channel, subject to central and local  differential privacy (DP/LDP) constraints. It has been shown that the superposition nature of the wireless channel provides a dual benefit of bandwidth efficient gradient aggregation, in conjunction with strong DP guarantees for the users. Specifically, the central DP privacy leakage has been shown to scale as $\mathcal{O}(1/K^{1/2})$, where $K$ is the number of users. It has also been shown that user sampling coupled with orthogonal transmission can enhance the central DP privacy leakage with the same scaling behavior. In this work, we show that, by join incorporating both wireless aggregation and user sampling, one can obtain even stronger privacy guarantees. We propose a private wireless gradient aggregation scheme, which  relies on independently randomized participation decisions by each user. The central DP leakage of our proposed scheme scales as $\mathcal{O}(1/K^{3/4})$. In addition, we show that LDP is also boosted by user sampling. We also present analysis for the convergence rate of the proposed scheme and study the tradeoffs between wireless resources, convergence, and privacy theoretically and empirically for two scenarios when the number of sampled participants are $(a)$ known, or $(b)$ unknown at the parameter server.
\end{abstract}

{\textbf{\textit{Index Terms}}:  Federated learning, Wireless aggregation, Differential privacy, User sampling}.

\newpage
\section{Introduction}
\label{sec:introduction}
Federated learning (FL) \cite{mcmahan2017communication} is a framework that enables multiple users to jointly train a machine learning (ML) model with the help of a parameter server (PS), typically, in an iterative manner. 
In this paper, we focus on a variation of FL termed federated stochastic gradient descent (FedSGD), where users compute gradients for the ML model on their local datasets, and subsequently exchange the gradients for model updates at the PS. 
There are several motivating factors behind the surging popularity of FL: $(a)$ centralized approaches can be inefficient in terms of storage/computation, whereas FL provides natural parallelization for training, and $(b)$ local data at each user is never shared, but only the local gradients are collected. However, even exchanging gradients in a raw form can leak information, as demonstrated in recent works \cite{shokri2017membership, hayes2019logan, melis2019exploiting,triastcyn2019federated, agarwal2018cpsgd, li2018federated, chen2018stochastic}. In addition, exchanging gradients incurs significant communication overhead. Therefore, it is crucial to design training protocols that are both communication efficient and private.


Since the training of FedSGD involves gradient aggregation from multiple users, the superposition property of wireless channels can naturally support this operation. 
Several recent works \cite{amiri2019machine, amiri2019federated, timchang2020FL, zhu2018low, yang2020federated, amiri2019over, zeng2019energy, sery2019analog, wang2019adaptive, abad2019hierarchical, khan2019federated} have focused on exploiting the wireless channel to alleviate the communication overhead of FL.
Depending on the transmission strategy, wireless FL can be broadly categorized into digital or analog schemes. In digital schemes, gradients from each user are compressed and transmitted to the PS using a multi-access scheme. 
Digital schemes were proposed in \cite{amiri2019machine, amiri2019federated, timchang2020FL}, where in \cite{amiri2019machine} the gradient vectors are first sparsified and quantized locally at the users by setting the desired number of top elements in magnitude to one value before transmissions. In \cite{amiri2019federated}, the authors modify the digital scheme in \cite{amiri2019machine} to allow only the user with the best channel condition to transmit. In \cite{timchang2020FL}, the authors tailor the quantization scheme to the capacity region of the underlying MAC, which allows the gradient vectors to be quantized according to both informativeness of the gradients and the channel conditions. However, digital schemes require the PS to decode individual gradients and then aggregate them.

For analog schemes, on the other hand, gradients are rescaled at each user to satisfy the power constraint and to mitigate the effect of channel noise.
All users then transmit the rescaled gradients via wireless channel simultaneously.
Non-orthogonal over the air aggregation  makes analog schemes more bandwidth efficient compared to digital ones. There have been several recent works focusing on the design of analog schemes for wireless FL. In \cite{zhu2018low, yang2020federated}, wireless aggregation is done by aligning the gradients through power control or beamforming. The communication efficiency is further enhanced by incorporating user scheduling. In addition to power control, \cite{amiri2019machine, amiri2019federated, amiri2019over} project the gradients to lower dimension prior to transmissions to improve communication efficiency, where \cite{amiri2019over} also utilizes user scheduling and only allows users with good channel conditions to transmit. In \cite{zeng2019energy}, the authors focus on minimizing the energy consumption of users in wireless FL by formulating and solving an optimization problem subject to latency constraints. In \cite{sery2019analog}, the authors proposed a gradient-based multiple access algorithm that let users transmit analog functions using common shape waveforms to mitigate the impact of fading. In \cite{wang2019adaptive}, the authors provide convergence analysis for wireless FL with non-i.i.d. data. Based on the bound on the convergence rate, the authors of \cite{wang2019adaptive} optimize the frequency of global aggregation based on the data, model, and system dynamics.

There is a large body of recent work focusing on the design of differentially private FL. Differential privacy (DP) \cite{dwork2014algorithmic} has been adopted a \textit{de facto} standard notion for private data analysis and aggregation. Within the context of FL, the notion of local differential privacy (LDP) is more suitable in which a  user can locally perturb and disclose the data to an \textit{untrusted} data curator/aggregator \cite{joseph2018local}. 
In the literature, there have been several research efforts to design FL algorithms satisfying LDP \cite{ geyer2017differentially, choudhury2019differential},
which require significant amount of perturbation noise to ensure privacy guarantees.
However, the amount of noise can be further reduced when employing user sampling \cite{balle2018privacy}, where users are sampled by the PS to participate in the training in each iteration.
However, sampling schemes can be challenging in practice since they require coordination between the PS and users, and may not be feasible if the PS is untrustworthy. 
Hence, decentralized sampling schemes that do not depend on the PS
for coordination are desirable. 
To reduce the dependency on the PS, Balle et.al. \cite{balle2020privacy} recently proposed a Random Check-in protocol. More specifically, users have the choice to decide whether or not to participate in the training process, and when to participate during the training process.

In addition to saving bandwidth and computation, it has been shown in \cite{seif2020wireless, liu2020privacy, sonee2020efficient} that wireless FL also naturally provides strong differential privacy (DP) \cite{Dwork20061} guarantees.
Specifically, in \cite{seif2020wireless}, it was shown that the superposition nature of the wireless channel provides a stronger privacy guarantee as well as faster convergence in comparison to orthogonal transmission. The privacy level is shown to scale as $\mathcal{O}(1/\sqrt{K})$, where $K$ is the number of users in the wireless FL system. 
On the other hand, it was shown in \cite{balle2018privacy} that one can obtain a similar scaling of $\mathcal{O}(1/\sqrt{K})$ for privacy leakage through user sampling. The scheme of \cite{balle2018privacy}, however, considers orthogonal transmission from the sampled users.


One natural question to ask is \textit{whether one could provide even stronger privacy guarantees by incorporating user sampling to the private wireless FedSGD scheme. If it does provide stronger guarantee, how much additional gain can be obtained? How can we optimally utilize the wireless resources, and what are the tradeoffs between convergence of FedSGD training, wireless resources and privacy?}

\begin{table}[t]
    \centering
    {\tabulinesep=1.2mm
    \begin{tabu} {|c|c|c|}
    \hline
     Transmission scheme    &  Without sampling & With sampling \\ \hline
       Orthogonal  & $\mathcal{O}(1)$ \cite{smith2017interaction}  & $\mathcal{O}(1/\sqrt{K})$ \cite{balle2018privacy} \\  \hline
       Wireless Aggregation & $\mathcal{O}(1/\sqrt{K})$ \cite{seif2020wireless} & ${\color{blue}\mathcal{O}(1/K^{3/4})}$ (Lemma \ref{lemma:optimalSamplingProb}) \\
       \hline
    \end{tabu}}
    \vspace{-10pt}
    \caption{Comparison for central privacy under: (1) orthogonal and (2) wireless aggregation transmissions.} 
    \label{fig:privacycomparison}
    \vspace{-30pt}
\end{table}

\textbf{Main Contributions}:  In this paper, we consider the problem of FedSGD training over Gaussian multiple access channels (MACs), subject to LDP and DP constraints. We propose a wireless FedSGD scheme with user sampling, where users are sampled uniformly or based on their channel conditions. We then study analog aggregation schemes coupled with the proposed sampling schemes, in which each user transmits a linear combination of $(a)$ local gradient and $(b)$ artificial Gaussian noise. The local gradients are processed as a function of the channel gains to \textit{align} the resulting gradients at the PS, whereas the artificial noise parameters are selected to satisfy the privacy constraints.  The existing privacy analysis in \cite{balle2018privacy, balle2020privacy} for FL with user sampling cannot be applied to our problem. The key challenge is that in each training iteration, the effective noise seen at the signal received by the PS over the wireless channel is a function of a random number of sampled users, making the DP/LDP analysis non-trivial.  Using concentration inequalities, we prove that the central privacy leakage scales as $\mathcal{O}(1/K^{3/4})$ with wireless aggregation and user sampling. We also provide convergence analysis of the proposed scheme for different sampling schemes.
To the best of our knowledge, this is the first result on wireless FedSGD with LDP and DP constraints with user sampling (see Table  \ref{fig:privacycomparison} for comparison).
  

\textit{Notations:} Boldface uppercase letters denote matrices (e.g., $\textbf{A}$),
boldface lowercase letters are used for vectors (e.g., $\textbf{a}$), we denote scalars by non-boldface lowercase letters (e.g., $x$), and sets by capital calligraphic letters (e.g., $\mathcal{X}$). $[K] \triangleq [1, 2, \cdots, K]$ represents the set of all integers from $1$ to $K$.  The set of natural numbers, integer numbers, real numbers and complex numbers are denoted by $\mathds{N}$, $\mathds{Z}$, $\mathds{R}$ and $\mathds{C}$, respectively.


\begin{figure}[t]
    \centering
    \includegraphics[width=0.6\columnwidth]{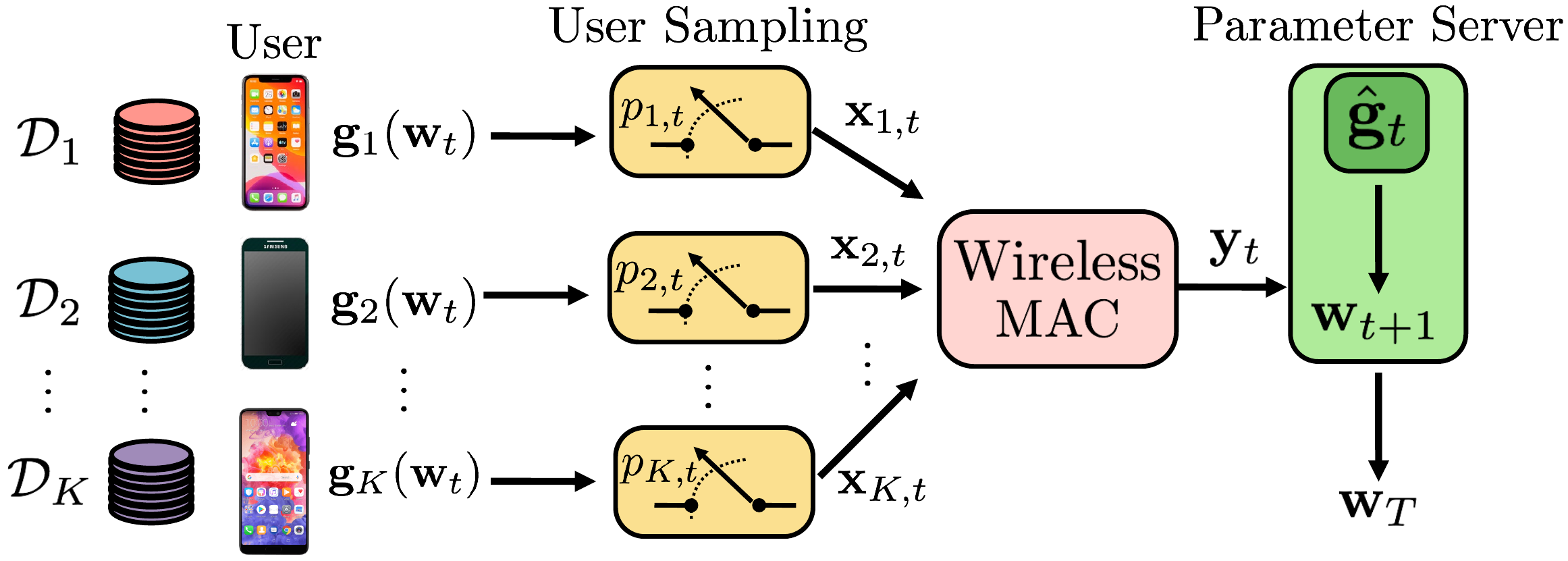}
    \vspace{-10pt}
    \caption{Illustration of the private wireless FedSGD framework: Users collaborate with the PS to jointly train a machine learning model over a Gaussian MAC. 
    }
    \label{fig:FL_system_mode}
    \vspace{-30pt}
\end{figure}

\section{System Model}\label{sec:system_model}
\textit{Wireless Channel Model}: We consider a single-antenna wireless FL system with $K$ users and a central PS. Users are connected to the PS through a Gaussian MAC as shown in Fig.   \ref{fig:FL_system_mode}. Let $\mathcal{K}_{t}$ denote the random set of users who participate in iteration $t$. The input-output relationship at the $t$-th block is

\begin{align}\label{eq:systemmodel}
      \mathbf{y}_{t} =  \sum_{k \in \mathcal{K}_{t}} h_{k,t} \mathbf{x}_{k,t}  + \mathbf{m}_{t},
\end{align}
where $\mathbf{x}_{k,t} \in \mathds{R}^{d}$  is the  signal transmitted by  user $k$ at the $t$-th block, and $\mathbf{y}_{t}$ is the received signal at the PS. Here, $h_{k,t} \geq 0$ is the channel coefficient between the $k$-th user and the PS at iteration $t$.
We assume a block flat-fading channel, where the channel coefficient remains constant within the duration of a communication block.  Each user is assumed to know its local channel gain, whereas we assume that the PS has global channel state information. Each user can transmit subject to average power constraint  i.e., $\mathds{E} \left[ \|\mathbf{x}_{k,t}\|_{2}^{2} \right] \leq P_{k}$.  $\mathbf{m}_{t}\in \mathds{R}^{d}$ is the channel noise whose elements are independent and identically distributed (i.i.d.) according to Gaussian distribution $\mathcal{N}(0, N_{0})$. 
The set of participants $\mathcal{K}_t$ can be obtained through various strategies. In this paper, we focus on user sampling, where user $k$ participates in the training at time $t$ according to probability $p_{k, t}$, for $k=1,\dots,K$. When $\mathcal{K}_t=[K]$, we recover the conventional FedSGD where every user participates in the training.

For this work, we consider $(a)$ time-invariant uniform sampling, where the sampling probability remains the same across users and iterations; $(b)$ time-variant uniform sampling, where the sampling probability remains the same across users but varies across iterations; and $(c)$ channel aware sampling, where sampling probabilities for each user can depend on the local channel gain between the user and the PS. We note that sampling strategies based on gradients or losses can potentially leak information about local datasets, hence, require analysis for privacy. Thus, we leave gradient-based sampling strategies to future work. 

 \textit{Federated Learning Problem}:  Each user $k$ has a private local dataset $\mathcal{D}_{k}$ with ${{D}_{k}} $ data points, denoted as $\mathcal{D}_{k} = \{(\mathbf{u}_{i}^{(k)}, v_{i}^{(k)})\}_{i=1}^{{{D}_{k}}}$, where $\mathbf{u}_{i}^{(k)}$ is the $i$-th data point and $v_{i}^{(k)}$ is the corresponding label at user $k$. The local loss function at user $k$ is given by
\begin{align}
    f_{k}(\mathbf{w}) = \frac{1}{{{D}_{k}}} \sum_{i=1}^{{{D}_{k}}} f(\mathbf{w}; \mathbf{u}_{i}^{(k)}, v_{i}^{(k)}) + \Omega R(\mathbf{w}),
\end{align}
where $\mathbf{w} \in \mathds{R}^{d}$ is the parameter vector to be optimized, $R(\mathbf{w})$ is a regularization function and $\Omega \geq 0$ is a regularization hyperparameter.   Users communicate with the PS through the Gaussian MAC described above in order to  train a model by minimizing the loss function $F(\mathbf{w})$, i.e., 
\begin{align}
    \mathbf{w}^{*} = \text{arg} \min_{\mathbf{w}} F(\mathbf{w}) & \triangleq \frac{1}{\sum_{k \in \mathcal{K}_{t}} D_{k} } \sum_{k \in \mathcal{K}_{t}} D_{k} {f_{k}(\mathbf{w})}.  
\end{align}
 The minimization of $F(\mathbf{w})$ is carried out iteratively through a distributed stochastic gradient descent (SGD) algorithm. More specifically, in the $t$-th training iteration, the PS broadcasts the global parameter vector $\mathbf{w}_{t}$ to all users. Each user $k$ computes his local gradient using stochastic mini batch $\mathcal{B}_k\subseteq \mathcal{D}_k$, with size $b_k$ (i.e., $|\mathcal{B}_{k}| = b_{k}$), i.e.,
\begin{align}
 \mathbf{g}_{k}(\mathbf{w}_t) = \frac{1}{{{b}_{k}}} \sum_{i\in\mathcal{B}_k}  \nabla f_{k}(\mathbf{w}_t; (\mathbf{u}_{i}^{(k)}, v_{i}^{(k)}) )   + \Omega \nabla R(\mathbf{w}_t). \label{eq:localGradSystemModel}
\end{align}
The participants, i.e., $k\in\mathcal{K}_t$, next pre-process their $\mathbf{g}_{k}(\mathbf{w}_t)$ and obtains $\mathbf{x}_{k,t}$, as explained below.  Then, the participants send their $\mathbf{x}_{k,t}$'s to the PS, where the PS receives $\mathbf{y}_t$ as defined in \eqref{eq:systemmodel}. Upon receiving $\mathbf{y}_t$, the PS performs post-processing on $\mathbf{y}_t$ to obtain $\hat{\mathbf{g}}_{t}$, the estimate of the true gradient $\mathbf{g}_{t}$ which is defined as,
\begin{align}
    {\mathbf{g}}_{t} = \frac{1}{\sum_{k =1}^{K} D_{k}}  \sum_{k =1}^{K} D_{k} \mathbf{g}_{k}(\mathbf{w}_{t}). \label{eq:exactG}
\end{align}
The global parameter $\mathbf{w}_{t}$ is updated using the estimated gradient $\hat{\mathbf{g}}_{t}$ according to $\mathbf{w}_{t+1} = \mathbf{w}_{t} - \eta_{t} \hat{\mathbf{g}}_{t}$, where $\eta_{t}$ is the learning rate of the distributed GD algorithm at iteration $t$. The iteration process continues until convergence.

Typically, in the wireless setting, the post-processing done at the PS involves removing channel effects, averaging the aggregated local gradients, and/or multiplying a constant to maintain the unbiasedness. These post-processing steps depend on the PS's knowledge of the channel condition, number of participants, and knowing how users are selected to participate. As mentioned above, the PS has global CSI. In addition, we assume that the PS knows the sampling probabilities $p_{k,t},~\forall k,t$. However, the number of participants may or may not be known at the PS. Thus, in this work, we study both cases, where $(a)$ $\mathcal{K}_t$ is known, and $(b)$ $\mathcal{K}_t$ is unknown, at the PS.

\textit{Wireless FL with User Sampling:}
The training continues for a total of $T$ iterations. Here, we describe the per-iteration operation of the algorithm. At the beginning of each iteration $t$, the PS transmits the model $\mathbf{w}_t$ to the users, and each user computes the local gradient using its local dataset according to \eqref{eq:localGradSystemModel}.
Each user $k$ participates in the training with probability $p_{k, t}$. Users then transmit their local gradients with $d$ channel uses of the wireless channel described in \eqref{eq:systemmodel}. The transmitted signal of user $k$ at iteration $t$ is given as:
\begin{align}
    \mathbf{x}_{k,t} = \begin{cases}  \alpha_{k,t} \left( {\mathbf{g}_{k}(\mathbf{w}_{t}) } + \mathbf{n}_{k,t} \right), & \text{w.p.}~p_{k, t}\\
    \mathbf{0}, & \text{otherwise}
    \end{cases}
    \label{eq:inputsignal}
\end{align}
where $\mathbf{n}_{k,t} \sim \mathcal{N}(0, {\sigma_{k,t}^2} \mathbf{I}_{d})  $ is the artificial noise term to ensure privacy, and $\alpha_{k,t}$ is the scaling  factor satisfying power constraint at each user. If a user is not participating, it does not transmit anything. We assume that the gradient vectors have a bounded norm, i.e., $\|\mathbf{g}_{k}(\mathbf{w}_{t})\|_{2} \leq L, \forall k$, and normalize the gradient vector by $L$.  The parameters $\alpha_{k,t}$s and $\sigma_{k,t}$s are designed such that the power constraints are satisfied, i.e., $\mathds{E} \left[ \|\mathbf{x}_{k,t}\|_{2}^{2} \right] = \alpha_{k,t}^{2} \left[\|\mathbf{g}_{k}(\mathbf{w}_{t})\|^{2} + d \sigma_{k,t}^{2} \right] \leq P_{k}$. From \eqref{eq:systemmodel} and \eqref{eq:inputsignal}, the received signal at the PS can be written as:
\begin{align}
    \mathbf{y}_{t} 
    & = \sum_{k \in \mathcal{K}_{t}}  h_{k,t} \alpha_{k,t}   {\mathbf{g}_{k}(\mathbf{w}_{t}) }  +  \underbrace{\sum_{k \in \mathcal{K}_{t}} h_{k,t} \alpha_{k,t} \mathbf{n}_{k,t} +  \mathbf{m}_{t}}_{\mathbf{z}_{t}},\label{eq:output}
\end{align}
where $\mathbf{z}_{t} \sim \mathcal{N}(0, \sigma_{{z}_{t}}^{2} \mathbf{I}_{d}) $ is the effective noise, and $\sigma_{{z}_{t}}^{2}  = \sum_{k \in \mathcal{K}_{t}} h_{k,t}^{2} \alpha_{k,t}^{2}  \sigma_{k,t}^{2} + N_{0}$. In order to carry out the summation of the local gradients over-the-air, all users pick the coefficients $\alpha_{k,t}$s in order to align their transmitted local gradient estimates. Specifically, user $k$ picks $\alpha_{k, t}$ so that 
\begin{align}
       h_{k, t} \alpha_{k,t} = 1, \forall k \in \mathcal{K}_{t}.\label{eq:alpha11}
\end{align}
For the alignment scheme described above, the received signal at the PS at iteration $t$ in \eqref{eq:output} simplifies to $\mathbf{y}_{t} = \sum_{k \in \mathcal{K}_{t}}    { \mathbf{g}_{k}(\mathbf{w}_{t}) } + \mathbf{z}_{t}$. The PS can perform two different post-processing operations to get unbiased gradient estimate $ \hat{\mathbf{g}}_t $, i.e., $\mathds{E} \left[\hat{\mathbf{g}}_{t} \right]=\mathbf{g}_t$ (see Appendix \ref{appendix:proof_theorem_2}), based on the knowledge it has: $(a)$ when $\mathcal{K}_t$ is known at the PS; $(b)$ when $\mathcal{K}_t$ is unknown at the PS.

\noindent \textbf{Case $(a)$:} When $\mathcal{K}_t$ is known at the PS, it obtains the unbiased gradient estimate $\hat{\mathbf{g}}_t$ as follows,
\begin{align}
    \hat{\mathbf{g}}_t &=  \frac{1}{\zeta_{t} |\mathcal{K}_t|}  \mathbf{y}_t = \frac{1}{ \zeta_{t}|\mathcal{K}_t|}  \sum_{k \in \mathcal{K}_{t}}  {\mathbf{g}_{k}(\mathbf{w}_{t})} +  \frac{1}{ \zeta_{t} |\mathcal{K}_t|}    \left[ \sum_{k \in \mathcal{K}_{t}}  \mathbf{n}_{k,t} +  \mathbf{m}_{t} \right], \label{eq:postProcessingKnown}
\end{align}
where $\zeta_{t} = 1 - \prod_{k=1}^{K} (1 - p_{k,t})$.

\noindent \textbf{Case $(b)$:} When $\mathcal{K}_t$ is unknown at the PS, it obtains the unbiased gradient estimate $\hat{\mathbf{g}}_t$ as follows,
\begin{align}
    &\hat{\mathbf{g}}_{t}  = \frac{1}{  \mu_{|\mathcal{K}_{t}|} }  \mathbf{y}_{t} = {  \frac{1}{\mu_{|\mathcal{K}_{t}|} }  \sum_{k \in \mathcal{K}_{t}}  {\mathbf{g}_{k}(\mathbf{w}_{t})}} + \frac{1}{ \mu_{|\mathcal{K}_{t}|} }  \left[ \sum_{k \in \mathcal{K}_{t}}  \mathbf{n}_{k,t} +  \mathbf{m}_{t} \right],
     \label{eq:postpreprocessing}
\end{align}
where $\mu_{|\mathcal{K}_{t}|}= \mathds{E} \left[|\mathcal{K}_t| \right] = \sum_{k=1}^{K} p_{k, t}$ is the expected number of participants in iteration $t$. The PS then update the models and repeats this process for $T$ iterations. 

\textit{Privacy Definitions:}
We assume that the PS is honest but curious. It is honest in the sense that it follows the FL procedure faithfully, but it might be interested in learning sensitive information about users. Therefore, the SGD algorithm for wireless FL should satisfy LDP constraints for each user. At the end of the training process, the PS may release the trained model to a third party. Thus, the training algorithm should provide central DP guarantees against any further post-processing or inference. The local and central DP are formally defined as follows:

\begin{definition} 
\vspace{-10pt}
($(\epsilon_{\ell}^{(k)}, \delta_{\ell})$-LDP \cite{{erlingsson2020encode}}) Let $\mathcal{X}_k$ be a set of all possible data points at user $k$. For user $k$, a randomized mechanism $\mathcal{M}_k: \mathcal{X}_{k} \rightarrow \mathds{R}^{d}$ is $(\epsilon_{\ell}^{(k)}, \delta_{\ell})$-LDP if for any $x,~x' \in \mathcal{X}_k$, and any measurable subset $\mathcal{O}_k \subseteq \text{Range}(\mathcal{M}_k)$, we have
\begin{align}
    \operatorname{Pr}(\mathcal{M}_k(x) \in \mathcal{O}_k) \leq \exp{(\epsilon_{\ell}^{(k)})}  \operatorname{Pr}(\mathcal{M}_k(x') \in \mathcal{O}_k) + \delta_{\ell}.
\end{align}
The setting when $\delta_{\ell} = 0$ is referred as pure $\epsilon_{\ell}^{(k)}$-LDP.
\end{definition}

\begin{definition} 
\vspace{-10pt}
($(\epsilon_{c}, \delta_{c})$-DP \cite{{erlingsson2020encode}}) Let $\mathcal{D}\triangleq \mathcal{X}_1 \times\mathcal{X}_2\times\dots\times\mathcal{X}_K$ be the collection of all possible datasets of all $K$ users. A randomized mechanism $\mathcal{M}: \mathcal{D} \rightarrow \mathds{R}^{d}$ is $(\epsilon_{c}, \delta_{c})$-DP if for any two neighboring datasets $D, D'$ and any measurable subset $\mathcal{O} \subseteq \text{Range}(\mathcal{M})$, we have
\begin{align}
    \operatorname{Pr}(\mathcal{M}(D) \in \mathcal{O}) &\leq \exp{(\epsilon_{c})}  \operatorname{Pr}(\mathcal{M}(D')  \in \mathcal{O}) + \delta_{c}. 
\end{align}
We refer to a pair of datasets $D, D' \in \mathcal{D}$ if $D'$ can be obtained from $D$ by removing one data element $x_{i}$ for some $i \in [K]$. The setting when $\delta_{c} = 0$ is referred as pure $\epsilon_{c}$-DP.
\end{definition}


\section{Main Results \& Discussions\label{sec:main_results}}

\subsection{Privacy Analysis for wireless FedSGD with User Sampling}

In this section, we first derive the central DP leakage for wireless FedSGD with user sampling. Specifically, we consider two sampling strategies: $(a)$ non-uniform sampling; and $(b)$ both time-variant and time-invariant uniform sampling. For non-uniform sampling, each user can be sampled according to a probability that depends on the channel conditions. We then study a special case, i.e., uniform sampling, to understand the asymptotic behavior of the central privacy as a function of the total number of users. 
In addition, we show that user sampling is also beneficial for the local privacy level. We also quantify the gain for the local privacy level achieved by user sampling and wireless aggregation where Gaussian mechanism is used at each sampled user.
We note that the knowledge of $\mathcal{K}_t$ at the PS does not play a role in the proofs of the privacy guarantees due to the robustness of post-processing of DP.
The privacy guarantee of the proposed wireless FedSGD with non-uniform sampling is stated in the following Theorem.

\begin{theorem}\label{theorem1_central_nonUniform} 
\vspace{-10pt}
(Non-uniform sampling)  Suppose each user $k$ participates in the training process at iteration $t$ according to probability $p_{k, t}$, and utilizes local mechanism that satisfies $(\epsilon_{\ell,t}^{(k)}, \delta_{\ell})$-LDP if they decided to participate. The central privacy level of the wireless FedSGD with user sampling at iteration $t$ is given as
\begin{align}
    \epsilon_{c, t} &\leq \log \left[ 1 + \frac{\max_k p_{k, t}}{1-\delta'} \left( e^{\frac{c}{\sqrt{\mu_{|\mathcal{K}_{t}|} - \beta K}}} -1 \right)  \right], \quad \delta_{c, t} = \delta'+ \frac{\max_k p_{k, t} \delta_{\ell}}{1-\delta'},\label{eq:centralNonUni}
\end{align}
for any $\delta'\in (2e^{-2\mu_{|\mathcal{K}_{t}|}^2/K}, 1)$ and $\beta = \frac{1}{\sqrt{K}} \sqrt{0.5 \log \left(2/\delta'\right)}$, where $\mu_{|\mathcal{K}_t|}=\sum_{k=1}^{K}p_{k, t}$ denotes the expected number of users participating in iteration $t$, and $c= \frac{2L}{\sigma_{\min}} \sqrt{2\log (1.25/\delta_{\ell})}$, where $\sigma_{\min}=\min_{k, t} \sigma_{k, t}$ and $L$ is the Lipschitz constant for the loss function.
\vspace{-10pt}
\end{theorem}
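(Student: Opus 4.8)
\emph{Proof strategy.} The plan is to separate the randomness of the participation set $\mathcal{K}_t$ from everything else, control it by concentration, and on the resulting ``typical'' event combine the Gaussian mechanism with privacy amplification by subsampling. Two reductions come for free: the post-processing $\mathbf{y}_t\mapsto\hat{\mathbf{g}}_t$ (in either Case~$(a)$ or Case~$(b)$) uses no data beyond $\mathbf{y}_t$, so it suffices to bound the privacy of the map $D\mapsto\mathbf{y}_t$, and for the same reason whether $\mathcal{K}_t$ is known at the PS is immaterial. I also use that the per-user Gaussian-mechanism hypothesis pins down each $\sigma_{k,t}$ in terms of $\epsilon_{\ell,t}^{(k)}$, so that $\sigma_{\min}=\min_{k,t}\sigma_{k,t}$ --- equivalently the constant $c$ --- captures the worst-case local noise level across users and iterations.

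Fix neighboring $D,D'$ differing in one user $j$, with $D$ containing $j$. Since the Bernoulli participation indicators $\{B_{k,t}\}_{k=1}^{K}$ are independent of the data, I couple the two runs so that they agree verbatim unless $j\in\mathcal{K}_t$, an event of probability $p_{j,t}\le\max_k p_{k,t}=:q$; this is exactly the structure handled by amplification by subsampling, with the differing user being the (Poisson-)subsampled one. Next introduce the good event $E=\{|\mathcal{K}_t|\ge\mu_{|\mathcal{K}_t|}-\beta K\}$. Because $|\mathcal{K}_t|=\sum_{k=1}^{K}B_{k,t}$ is a sum of $K$ independent $\{0,1\}$ variables with mean $\mu_{|\mathcal{K}_t|}$, Hoeffding's inequality gives $\Pr[E^{c}]\le\exp(-2\beta^{2}K)=\delta'/2\le\delta'$, and the standing assumption $\delta'>2e^{-2\mu_{|\mathcal{K}_t|}^{2}/K}$ is precisely what makes $\mu_{|\mathcal{K}_t|}-\beta K>0$ so that the statement is non-vacuous; moreover $E$ depends only on the sampling coins, hence $\Pr[E\mid D]=\Pr[E\mid D']=\Pr[E]$.

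On $E$ the effective noise obeys $\sigma_{z_t}^{2}=\sum_{k\in\mathcal{K}_t}\sigma_{k,t}^{2}+N_0\ge|\mathcal{K}_t|\,\sigma_{\min}^{2}\ge(\mu_{|\mathcal{K}_t|}-\beta K)\,\sigma_{\min}^{2}$. Conditioning further on the value of $\mathcal{K}_t$ (which also fixes whether $j$ participates), monotonicity of the Gaussian mechanism in the noise variance together with the $\ell_{2}$-sensitivity bound $2L$ for $\sum_{k\in\mathcal{K}_t}\mathbf{g}_k(\mathbf{w}_t)$ with respect to user $j$ show that, on the sub-event $\{j\in\mathcal{K}_t\}\cap E$, the release of $\mathbf{y}_t$ is $(\epsilon_0,\delta_\ell)$-DP with $\epsilon_0=\tfrac{2L\sqrt{2\log(1.25/\delta_\ell)}}{\sqrt{\mu_{|\mathcal{K}_t|}-\beta K}\,\sigma_{\min}}=\tfrac{c}{\sqrt{\mu_{|\mathcal{K}_t|}-\beta K}}$, and averaging over $\mathcal{K}_t$ (weights independent of the data) preserves this. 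Amplification by subsampling at rate $q$ --- the Poisson/remove version in \cite{balle2018privacy} --- then upgrades the conditional guarantee to $\big(\log(1+q(e^{\epsilon_0}-1)),\,q\delta_\ell\big)$-DP given $E$. Finally I remove the conditioning via $\Pr[\mathbf{y}_t(D)\in\mathcal{O}]\le\Pr[\mathbf{y}_t(D)\in\mathcal{O}\mid E]+\Pr[E^{c}]$ and $\Pr[\mathbf{y}_t(D')\in\mathcal{O}\mid E]\le\Pr[\mathbf{y}_t(D')\in\mathcal{O}]/(1-\delta')$; substituting these into the amplified conditional inequality and using $\Pr[E^c]\le\delta'$ reproduces $\epsilon_{c,t}\le\log[1+\tfrac{q}{1-\delta'}(e^{c/\sqrt{\mu_{|\mathcal{K}_t|}-\beta K}}-1)]$ and $\delta_{c,t}=\delta'+\tfrac{q\delta_\ell}{1-\delta'}$, the $1/(1-\delta')$ factors being exactly the price of renormalizing the conditional measure $\Pr[\,\cdot\mid E\,]$ and the additive (unamplified) $\delta'$ coming from the concentration step.

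I expect the main obstacle to be the interplay of concentration with subsampling amplification in the presence of a data-independent but random noise level: since the effective noise at the PS is a function of $|\mathcal{K}_t|$, neither the Gaussian-mechanism computation nor the off-the-shelf subsampling lemma can be invoked as a black box. One must first carve out the high-probability event $E$ on which $|\mathcal{K}_t|$ admits a deterministic lower bound, verify that conditioning on $E$ (and on $\mathcal{K}_t$) still leaves a genuine Gaussian mechanism to which amplification applies, and then track carefully how the concentration-failure probability $\delta'$ enters $\delta_{c,t}$ additively and unamplified while the Gaussian slack $\delta_\ell$ is scaled by the sampling rate $q$ --- getting these constants, and the $1/(1-\delta')$ correction, exactly right is the delicate part.
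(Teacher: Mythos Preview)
Your high-level strategy is the paper's: condition on a Hoeffding event $E$ that lower–bounds $|\mathcal{K}_t|$, apply the Gaussian mechanism with the aggregated noise floor available on $E$, invoke subsampling amplification for the differing user, and uncondition. The paper does this through an explicit lemma that sums over all participation sets $A$ compatible with $E$, but structurally the two arguments coincide.

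There is, however, a real gap in one step, and it is exactly the step you flag as delicate. Once you condition on $E=\{|\mathcal{K}_t|\ge\mu_{|\mathcal{K}_t|}-\beta K\}$, the indicator $B_{j,t}$ is no longer Bernoulli$(p_{j,t})$: $E$ is a lower–tail event for the count, so conditioning on it \emph{inflates} each inclusion probability. The rate you are entitled to use in the amplification lemma is therefore not $q=\max_kp_{k,t}$ but the conditional rate $\bar p_j:=\Pr[j\in\mathcal{K}_t\mid E]$, which the paper bounds via $\bar p_j=\Pr[\mathcal{E}_j\cap E]/\Pr[E]\le p_j/(1-\delta')$. That bound, not the unconditioning inequality $\Pr[\,\cdot\mid E]\le\Pr[\,\cdot\,]/(1-\delta')$, is the true source of the $1/(1-\delta')$ factor in both $\epsilon_{c,t}$ and $\delta_{c,t}$. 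Correspondingly, the paper never divides the $D'$–side by $1-\delta'$; it keeps the factor $\Pr[E]$ multiplying the conditional bound and uses $\Pr[E]\,\Pr[Y_{-k}\!\in\!\mathcal{S}\mid E]\le\Pr[Y_{-k}\!\in\!\mathcal{S}]$, which cancels cleanly. If you follow your stated chain literally---amplify at rate $q$ given $E$, then apply $\Pr[\,\cdot\mid E]\le\Pr[\,\cdot\,]/(1-\delta')$ on the $D'$ side---you obtain $e^{\epsilon_{c,t}}=\bigl(1+q(e^{\epsilon_0}-1)\bigr)/(1-\delta')$ and $\delta_{c,t}=\delta'+q\delta_\ell$, which are not the theorem's constants; worse, the first step is unjustified because the conditional inclusion rate may exceed $q$. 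Move the $1/(1-\delta')$ into the sampling rate and replace your $D'$–side inequality by the $\Pr[E]\,\Pr[\cdot\mid E]\le\Pr[\cdot]$ cancellation, and your proof becomes the paper's.
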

The proof of the Theorem can be found in Appendix \ref{appendix:central_privacy_analysis}. The privacy parameters in \eqref{eq:centralNonUni} indicates that the central privacy leakage depends on the user with the highest sampling probability. Intuitively, a user with high sampling probability participates in the training process more often than other users with lower probabilities, thereby having most impact on the central privacy leakage. For the case with uniform sampling probability, the privacy parameters can be simplified to the following (the proof of Corollary follows directly from Theorem \ref{theorem1_central_nonUniform}):
\begin{corollary}\label{cor::uniform_sampling} 
\vspace{-10pt}
(Uniform sampling) Suppose each user decides to participate with probability $p_{k,t} = p_{t}$, and the local mechanism satisfies $(\epsilon_{\ell,t}^{(k)}, \delta_{\ell})$-LDP for each user $k$. The central privacy level of the wireless FedSGD with user sampling is given as 
\begin{align}
    \epsilon_{c, t} & \leq  \log \left[ 1 + \frac{p_t}{1-\delta'} \big(e^{\frac{c}{\sqrt{K (p_t - \beta)}}} - 1 \big) \right], \quad \delta_{c, t}  = \delta' + \frac{p_t \delta_{\ell}}{1 - \delta'}, \label{eq:uniform_central_privacy}
\end{align}
for any $\delta'\in (2e^{-2p^2K}, 1)$ and $\beta = \frac{1}{\sqrt{K}} \sqrt{0.5 \log \left(2/\delta'\right)}$, where $c= \frac{2L}{\sigma_{\min}} \sqrt{2\log (1.25/\delta_{\ell})}$.
\vspace{-10pt}
\end{corollary}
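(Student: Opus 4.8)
The plan is to derive Corollary~\ref{cor::uniform_sampling} directly from Theorem~\ref{theorem1_central_nonUniform} by specializing to a common sampling probability $p_{k,t}=p_t$ for all $k\in[K]$. First I would note that $\max_k p_{k,t}=p_t$, so the prefactor $\frac{\max_k p_{k,t}}{1-\delta'}$ appearing in both $\epsilon_{c,t}$ and $\delta_{c,t}$ in \eqref{eq:centralNonUni} becomes $\frac{p_t}{1-\delta'}$; this immediately yields the stated $\delta_{c,t}=\delta'+\frac{p_t\delta_\ell}{1-\delta'}$.

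Next I would simplify the expected number of participants: since $\mu_{|\mathcal{K}_t|}=\sum_{k=1}^{K}p_{k,t}=Kp_t$ under uniform sampling, the exponent inside the logarithm in \eqref{eq:centralNonUni} becomes
\[
\frac{c}{\sqrt{\mu_{|\mathcal{K}_t|}-\beta K}}=\frac{c}{\sqrt{Kp_t-\beta K}}=\frac{c}{\sqrt{K(p_t-\beta)}},
\]
which is exactly the exponent in \eqref{eq:uniform_central_privacy}. Both $\beta=\frac{1}{\sqrt{K}}\sqrt{0.5\log(2/\delta')}$ and $c=\frac{2L}{\sigma_{\min}}\sqrt{2\log(1.25/\delta_\ell)}$ are defined independently of the sampling probabilities and therefore carry over verbatim.

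Finally I would translate the admissibility condition on $\delta'$. Theorem~\ref{theorem1_central_nonUniform} requires $\delta'\in(2e^{-2\mu_{|\mathcal{K}_t|}^2/K},1)$, and substituting $\mu_{|\mathcal{K}_t|}=Kp_t$ gives $2e^{-2\mu_{|\mathcal{K}_t|}^2/K}=2e^{-2K^2p_t^2/K}=2e^{-2p_t^2K}$, which is precisely the range stated in Corollary~\ref{cor::uniform_sampling}. Assembling these three substitutions reproduces \eqref{eq:uniform_central_privacy} exactly, so no further argument is needed. There is no genuine analytic obstacle here; the only points requiring care are bookkeeping the $K$-dependence correctly when rewriting $\mu_{|\mathcal{K}_t|}^2/K$ as $Kp_t^2$ and $\mu_{|\mathcal{K}_t|}-\beta K$ as $K(p_t-\beta)$, and observing that the admissible interval for $\delta'$ is nonempty if and only if $p_t-\beta>0$, which is equivalent to $\delta'>2e^{-2p_t^2K}$ and hence automatically guaranteed by the hypothesis, so that the square root in the exponent is well defined.
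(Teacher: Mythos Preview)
Your proposal is correct and matches the paper's own approach exactly: the paper states that the corollary ``follows directly from Theorem~\ref{theorem1_central_nonUniform},'' and your three substitutions ($\max_k p_{k,t}=p_t$, $\mu_{|\mathcal{K}_t|}=Kp_t$, and the corresponding rewriting of the $\delta'$ range) are precisely what that direct specialization entails.
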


We note that both \eqref{eq:centralNonUni} (respectively, \eqref{eq:uniform_central_privacy}) is a convex function of $\{p_{k,t}\}_{k=1}^K$ (respectively, $p_t$) when $\epsilon_{\ell,t}^{(k)} \leq 1$. If the primary goal is to have strong privacy guarantee and does not need fast convergence, one can solve for the optimal sampling probabilities using the expressions in \eqref{eq:centralNonUni} and \eqref{eq:uniform_central_privacy}. However, it is difficult to obtain a closed form solution of the optimal sampling probability for the non-uniform case. Due to convexity, one can still solve it numerically using convex solvers. In contrast to the non-uniform case, one can solve for the optimal sampling probability for the uniform case as stated in the following Lemma.

\begin{lemma} 
\vspace{-10pt}
The optimal sampling probability that minimizes \eqref{eq:uniform_central_privacy} is given by 
\begin{align}
    p_t^{*} & = \min \left[ 1, \frac{2}{\sqrt{K}} \sqrt{\frac{1}{2} \log \big(\frac{2}{\delta'}\big)} \right]. \label{eqn:central_optimal_sampling}
\end{align}
By plugging $p_t^*$ back into \eqref{eq:uniform_central_privacy}, one can obtain the following upper bound on the central DP,
\begin{align}
 \epsilon_{c} &= \log \left[ \frac{2\sqrt{\frac{1}{2} \log \left(\frac{2}{\delta'}\right)}}{\sqrt{K}(1-\delta')} \big(e^{\frac{c}{\sqrt[4]{K\frac{1}{2} \log \left(\frac{2}{\delta'}\right)}}} -1 \big) + 1\right] = \mathcal{O}\left(\frac{1}{K^{3/4}}\right).
\end{align}
\label{lemma:optimalSamplingProb}
 \vspace{-30pt}
\end{lemma}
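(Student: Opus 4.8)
The plan is to use the monotonicity of $x\mapsto\log(1+x)$: minimizing $\epsilon_{c,t}$ in \eqref{eq:uniform_central_privacy} over the admissible range of $p_t$ is equivalent to minimizing
\[
  g(p_t)=\frac{p_t}{1-\delta'}\Bigl(e^{c/\sqrt{K(p_t-\beta)}}-1\Bigr).
\]
Here the admissibility condition $\delta'>2e^{-2p_t^2K}$ from Corollary~\ref{cor::uniform_sampling} rearranges to exactly $p_t>\beta$, which is also precisely what is needed for $g$ to be well defined; together with the probability constraint $p_t\le 1$ this gives the feasible set $(\beta,1]$. Since $g(p_t)\to\infty$ as $p_t\downarrow\beta$ (the exponent blows up) while $g(1)<\infty$, any minimizer lies in the interior or at the endpoint $p_t=1$.

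Next I would reduce this to a one-parameter optimization. Substituting $u=p_t-\beta\in(0,1-\beta]$ and $a=c/\sqrt K$, the objective, up to the positive constant $(1-\delta')^{-1}$, is $(u+\beta)\bigl(e^{a/\sqrt u}-1\bigr)$. In the regime governing the scaling law one has $p_t=\Theta(1/\sqrt K)$, hence $a/\sqrt u=\Theta(K^{-1/4})\to 0$ and $e^{a/\sqrt u}-1=a/\sqrt u+o(a/\sqrt u)$; to leading order one is therefore minimizing
\[
  \frac{a}{1-\delta'}\cdot\frac{u+\beta}{\sqrt u}=\frac{a}{1-\delta'}\bigl(\sqrt u+\beta\,u^{-1/2}\bigr).
\]
By AM--GM, $\sqrt u+\beta\,u^{-1/2}\ge 2\sqrt\beta$ with equality iff $u=\beta$, so the leading-order minimizer is $u^\star=\beta$, i.e.\ $p_t^{*}=2\beta=\tfrac{2}{\sqrt K}\sqrt{\tfrac12\log(2/\delta')}$, as long as this value does not exceed $1$. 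If $2\beta>1$, then $u\mapsto\sqrt u+\beta\,u^{-1/2}$ is still decreasing on $(0,1-\beta]$, so the constrained minimizer is the endpoint $p_t^{*}=1$; the two cases combine into \eqref{eqn:central_optimal_sampling}. One then substitutes $p_t^{*}=2\beta$ back into \eqref{eq:uniform_central_privacy}: using $K(p_t^{*}-\beta)=K\beta=\bigl(K\cdot\tfrac12\log(2/\delta')\bigr)^{1/2}$, the exponent collapses to $c/\sqrt{K\beta}=c/\sqrt[4]{K\tfrac12\log(2/\delta')}$ and the prefactor to $\tfrac{2\beta}{1-\delta'}=\tfrac{2\sqrt{\tfrac12\log(2/\delta')}}{\sqrt K(1-\delta')}$, which is exactly the claimed closed form for $\epsilon_c$. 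For the rate: as $K\to\infty$ the exponent tends to $0$, so $e^{c/\sqrt[4]{\,\cdot\,}}-1=\Theta(K^{-1/4})$; multiplying by the $\Theta(K^{-1/2})$ prefactor shows the argument of the outer logarithm is $1+\Theta(K^{-3/4})$, and $\log(1+x)\le x$ then yields $\epsilon_c=\mathcal{O}(1/K^{3/4})$.

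The main obstacle is making precise in what sense \eqref{eqn:central_optimal_sampling} is optimal: the clean value $2\beta$ minimizes the first-order surrogate $(u+\beta)/\sqrt u$ rather than $g$ itself. Indeed, differentiating $g$ and evaluating at $p_t=2\beta$ gives a derivative whose sign is that of $e^{w}(1-w)-1$ with $w=a/\sqrt\beta\ge 0$, which is nonpositive (with equality only at $w=0$, i.e.\ $K=\infty$), so the exact minimizer sits slightly above $2\beta$ for finite $K$. This is resolved either by stating that the convenient upper bound on $\epsilon_{c,t}$ is what is being optimized, or by verifying that the exact minimizer $p_t^\dagger$ satisfies $p_t^\dagger=2\beta\,(1+o(1))$ and that substituting either choice yields the same $\mathcal{O}(1/K^{3/4})$ rate. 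Everything else --- monotonicity of $\log$, the AM--GM step, the endpoint case for $2\beta>1$, and the algebraic simplification of the exponent and prefactor --- is routine.
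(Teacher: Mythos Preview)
Your approach is essentially the same as the paper's: both strip off the outer logarithm, work with $g(p_t)=\frac{p_t}{1-\delta'}\bigl(e^{c/\sqrt{K(p_t-\beta)}}-1\bigr)$, linearize the exponential, and locate the minimizer $p_t^{*}=2\beta$ of the resulting surrogate $(u+\beta)/\sqrt{u}$ before substituting back. The only cosmetic differences are that you use AM--GM where the paper sets a derivative to zero, and that you are explicit about the linearization and the surrogate-versus-exact-minimizer gap---the paper in fact differentiates the linearized function while labeling it $\partial\tilde\epsilon_c/\partial\tilde k$, so your caveat is, if anything, more careful than the original.
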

The proof of Lemma \ref{lemma:optimalSamplingProb} is presented in Appendix \ref{optimal_sampling_probability}. From Lemma \ref{lemma:optimalSamplingProb}, we observe that the central privacy level behaves as $\mathcal{O}(1/K^{3/4})$ as opposed to the $\mathcal{O}(1/\sqrt{K})$ for wireless FL without sampling \cite{seif2020wireless} and $\mathcal{O}(1/\sqrt{K})$ for FL with orthogonal transmission and user sampling \cite{balle2020privacy} (see Table \ref{fig:privacycomparison}). Clearly, when both wireless aggregation and user sampling are employed, we can obtain additional benefit in terms of central privacy. We also plot the central privacy level of the proposed scheme against other variations (see Fig. \ref{fig:OMA_vs_NOMA_sampling}).

\begin{figure}[t]
\centering
	
	\begin{minipage}[t]{.45\textwidth}
	\centering
		\subcaptionbox{\label{fig:OMA_vs_NOMA_sampling}}
	{\includegraphics[width=\linewidth]{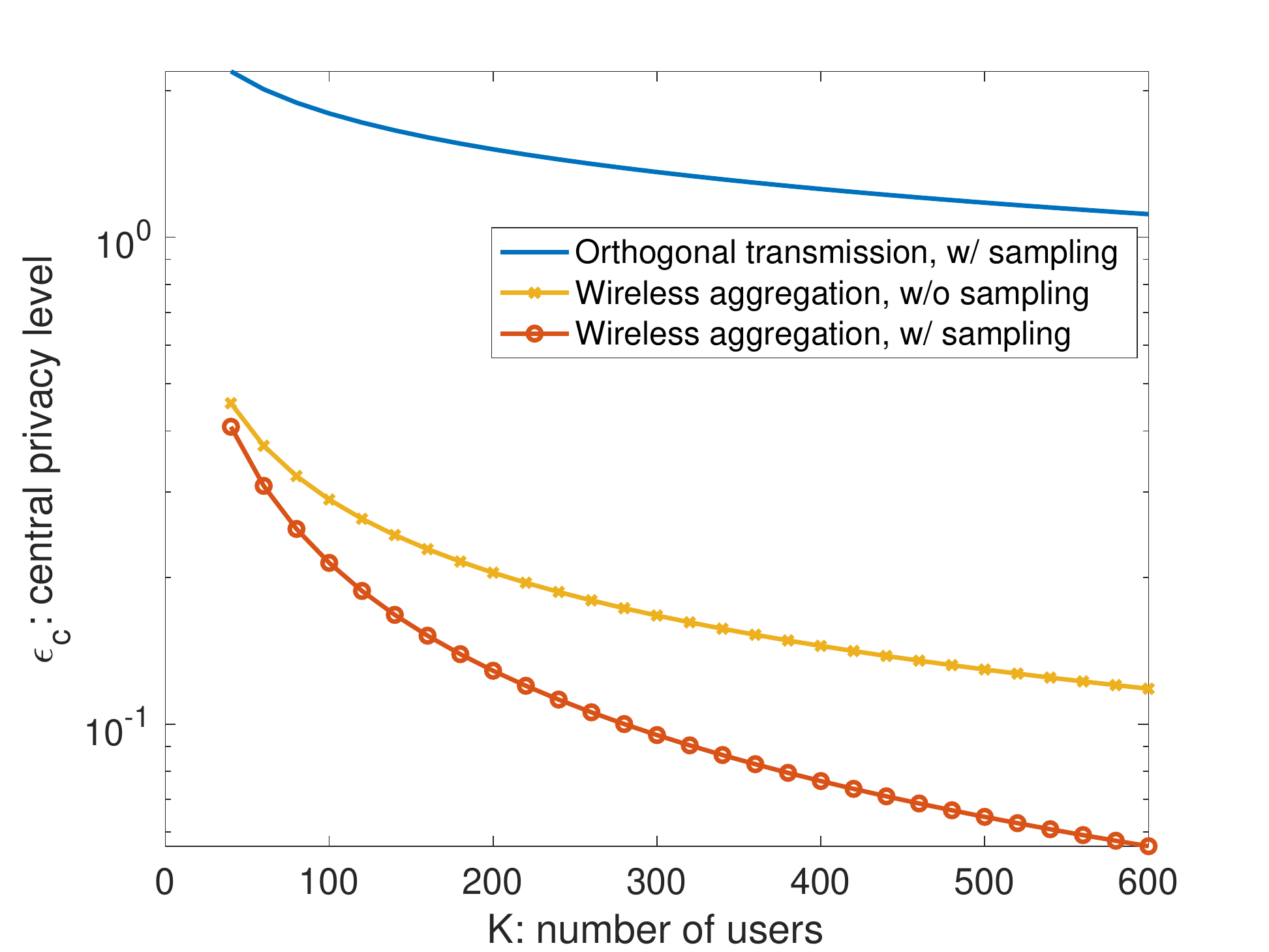}
	   }
	   
	\end{minipage}
	\hspace{5pt}
	\begin{minipage}[t]{.41\textwidth}
	\centering
 \subcaptionbox{\label{fig:privacyscaling}}
	{\includegraphics[width=\linewidth]{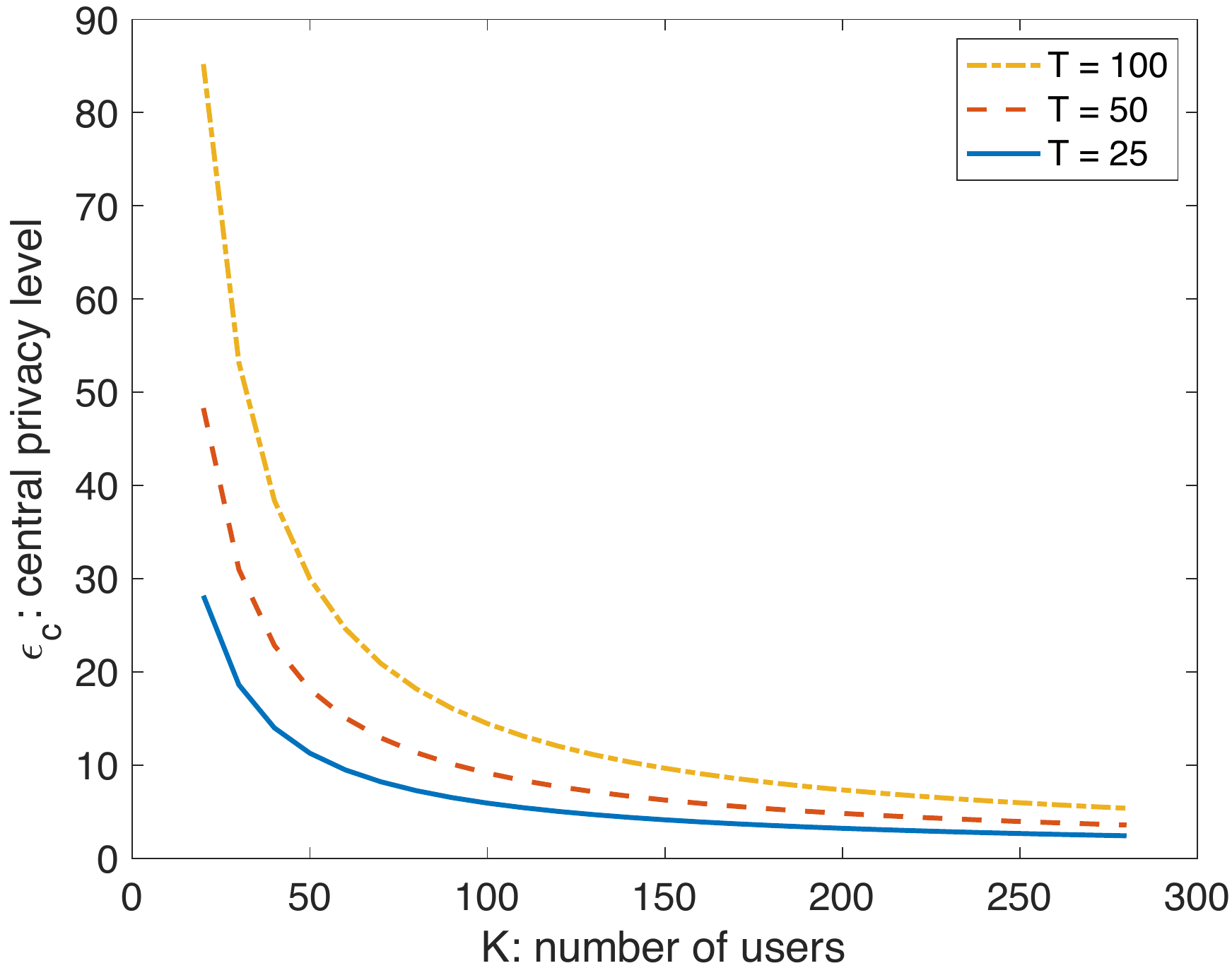}
 }
 \end{minipage}
 \vspace{-15pt}
    \caption{$(a)$ Comparison for central privacy, where wireless aggregation with sampling is shown to outperform other variants; $(b)$ Total privacy leakage as a function of $K$, number of users for different values of $T$, the number of training iterations, where $L=1$, $\sigma_{k,t} =N_{0} = 3$, $\delta_{l} = \delta' = 10^{-4}$ and $ p  =  \frac{2}{\sqrt{K}} \sqrt{\frac{1}{2} \log \left(\frac{2}{\delta'}\right)}$ for both figures.}
	
	   \vspace{-30pt}
\end{figure}

Interestingly, the addition of user sampling in wireless FedSGD also provides benefit for LDP. We next analyze the local privacy level achieved by the FedSGD transmission scheme.

\begin{lemma} \label{lemma1:local_privacy}
\vspace{-10pt}
For each user $k$, the proposed transmission scheme achieves $(\epsilon_{\ell,t}^{(k)}, p_{k,t} (\delta_{\ell} + \delta'))$-LDP  per iteration, where
\begin{align}
       \epsilon_{\ell,t}^{(k)} 
        & \leq \frac{ 1}{\sqrt{  1 + \kappa_t }}  \times \frac{ 2  L}{{  \sigma_{\min,t}}}  \sqrt{2 \log \frac{1.25}{\delta_{\ell}} },
\end{align}
where $\sigma_{\min, t} \triangleq \min_{k} \sigma_{k,t}$, $\kappa_t \triangleq \sum_{i=1, i\neq k}^{K}  p_{i,t} - \beta K$, where $\beta$ and $\delta'$ are defined in Theorem \ref{theorem1_central_nonUniform}.
\vspace{-10pt}
\end{lemma}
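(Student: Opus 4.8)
The plan is to show that, for a fixed user $k$ and iteration $t$, the map from user $k$'s local data to the PS's observation $\mathbf{y}_t$ collapses to a Gaussian mechanism whose effective noise variance is inflated by the artificial noise of the \emph{other} participating users, and then to control the randomness of that variance by a concentration argument. First I would record the sensitivity: for LDP of user $k$, neighboring datasets change only $\mathbf{g}_k(\mathbf{w}_t)$, leaving $\{\mathbf{g}_i(\mathbf{w}_t)\}_{i\neq k}$ untouched, and since $\|\mathbf{g}_k(\mathbf{w}_t)\|_2\le L$ the $\ell_2$-sensitivity of $\mathbf{g}_k$ is at most $2L$.

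Next I would condition on the participation decisions of all users other than $k$, i.e., on $\mathcal{K}_t\setminus\{k\}$, and on whether user $k$ itself participates. If $k$ does not participate (probability $1-p_{k,t}$), then $\mathbf{y}_t$ does not depend on user $k$'s data, so this branch is $(0,0)$-LDP. If $k$ participates, then by the alignment $h_{i,t}\alpha_{i,t}=1$ we have
\[ \mathbf{y}_t \;=\; \mathbf{g}_k(\mathbf{w}_t)\;+\;\Big(\textstyle\sum_{i\in\mathcal{K}_t\setminus\{k\}}\mathbf{g}_i(\mathbf{w}_t)\Big)\;+\;\mathbf{z}_t,\qquad \mathbf{z}_t\sim\mathcal{N}\big(0,\sigma_{\mathrm{eff},t}^2\mathbf{I}_d\big), \]
with $\sigma_{\mathrm{eff},t}^2=\sigma_{k,t}^2+\sum_{i\in\mathcal{K}_t\setminus\{k\}}\sigma_{i,t}^2+N_0$. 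The bracketed term is data-independent for user $k$, so by post-processing (shift) invariance this branch is equivalent to releasing $\mathbf{g}_k(\mathbf{w}_t)+\mathbf{z}_t$, i.e., the Gaussian mechanism on $\mathbf{g}_k$. The classical Gaussian-mechanism analysis \cite{dwork2014algorithmic} then gives, conditioned on $\mathcal{K}_t\setminus\{k\}$, an $(\epsilon,\delta_\ell)$-LDP guarantee with $\epsilon=\frac{2L}{\sigma_{\mathrm{eff},t}}\sqrt{2\log(1.25/\delta_\ell)}$ (valid for $\epsilon\le1$); using $\sigma_{k,t},\sigma_{i,t}\ge\sigma_{\min,t}$ and $N_0\ge0$ gives $\sigma_{\mathrm{eff},t}^2\ge(1+|\mathcal{K}_t\setminus\{k\}|)\sigma_{\min,t}^2$, hence $\epsilon\le\frac{2L}{\sigma_{\min,t}\sqrt{1+|\mathcal{K}_t\setminus\{k\}|}}\sqrt{2\log(1.25/\delta_\ell)}$.

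The remaining work is to strip the conditioning on $\mathcal{K}_t\setminus\{k\}$. Since $|\mathcal{K}_t\setminus\{k\}|$ is a sum of independent Bernoulli$(p_{i,t})$ variables with mean $\sum_{i\neq k}p_{i,t}$ and is independent of user $k$'s data, Hoeffding's inequality with deviation $\beta K$ (with $\beta$ as in Theorem~\ref{theorem1_central_nonUniform}) gives $\Pr[|\mathcal{K}_t\setminus\{k\}|<\kappa_t]\le\delta'$, where $\kappa_t=\sum_{i\neq k}p_{i,t}-\beta K$; on the complementary (good) event, $\epsilon\le\epsilon_{\ell,t}^{(k)}:=\frac{1}{\sqrt{1+\kappa_t}}\cdot\frac{2L}{\sigma_{\min,t}}\sqrt{2\log(1.25/\delta_\ell)}$. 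I would then average the privacy inequality over the realizations of $\mathcal{K}_t\setminus\{k\}$: good realizations cost multiplier $e^{\epsilon_{\ell,t}^{(k)}}$ and additive slack $\delta_\ell$, while the total probability of bad realizations, at most $\delta'$, is absorbed into the additive slack, so that conditioned on user $k$ participating the mechanism is $(\epsilon_{\ell,t}^{(k)},\delta_\ell+\delta')$-LDP. Finally, the per-iteration mechanism for user $k$ is a mixture of the data-independent ``non-participation'' branch (weight $1-p_{k,t}$) and this $(\epsilon_{\ell,t}^{(k)},\delta_\ell+\delta')$-LDP ``participation'' branch (weight $p_{k,t}$); for such a mixture the $\epsilon$-parameter is unchanged and the additive $\delta$ scales by the mixture weight, giving $(\epsilon_{\ell,t}^{(k)},p_{k,t}(\delta_\ell+\delta'))$-LDP.

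I expect the main obstacle to be exactly that $\sigma_{\mathrm{eff},t}$ is random, so the Gaussian mechanism does not hand back a single deterministic $\epsilon$, and the accounting must simultaneously track three sources of slack: the Gaussian tail $\delta_\ell$, the concentration failure $\delta'$, and the participation probability $p_{k,t}$. Once one observes that the other users' aligned gradients enter only as a data-independent additive shift, the reduction to a plain Gaussian mechanism is immediate, and the rest is bookkeeping via Hoeffding together with the standard ``condition on a high-probability, data-independent event'' and ``mixture'' lemmas for $(\epsilon,\delta)$-DP.
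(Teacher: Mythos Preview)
Your proposal is correct and follows essentially the same approach as the paper: bound the sensitivity at $2L$ via the alignment $h_{k,t}\alpha_{k,t}=1$, condition on user $k$'s participation, reduce the participation branch to a Gaussian mechanism whose effective variance is inflated by the other participants' noise, apply Hoeffding to the (data-independent) count $|\mathcal{K}_t\setminus\{k\}|$ to deterministically lower-bound that variance on a high-probability event, and then recombine the two branches via total probability so that the additive slack picks up the factor $p_{k,t}$. The paper's write-up is terser (it appeals to ``similar steps'' from the central-DP proof for the Hoeffding-and-condition step), but the decomposition and the bookkeeping of the three slack sources $\delta_\ell$, $\delta'$, $p_{k,t}$ are the same as yours.
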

The proof is presented in Appendix \ref{appendix::sensitivity_analysis}.


\begin{remark} 
\vspace{-10pt}
From Lemma \ref{lemma1:local_privacy}, we can observe the privacy benefits of wireless gradient aggregation. 
Asymptotically, the local privacy level behaves like $\mathcal{O}(1/\sqrt{1+\kappa_t})$. In contrast, the local privacy achieved by
orthogonal transmission 
scales as a constant, and does not decay with $K$ \cite{seif2020wireless}. 
\vspace{-10pt}
\end{remark}

While Theorem \ref{theorem1_central_nonUniform} shows the per-iteration leakage, we can use advanced composition results for DP using the Gaussian mechanism to obtain the total privacy leakage when the wireless FL algorithm is used for $T$ iterations. When the sampling probability is time-variant, using existing results in \cite{kairouz2015composition}, it can be readily shown that the total leakage over $T$ iterations of the proposed scheme is $(\epsilon_{c}^{(T)},  \delta_{c}^{(T)} )$-DP for $\tilde{\delta} \in (0, 1]$ where $\epsilon_{c}^{(T)}$ and $\delta_{c}^{(T)}$ can be found as follows,
\begin{align}
    \epsilon_{c}^{(T)} & = \sum_{t \in [T]} \frac{(e^{\epsilon_{c,t}} - 1)\epsilon_{c,t}}{(e^{\epsilon_{c,t}} + 1)} + \sqrt{2 \log(1/\tilde{\delta})  \sum_{t \in [T]} \epsilon_{c,t}^{2}},\label{eq:HeteroComp} \\ 
    & \overset{(a)} \leq \left( e^{\frac{c}{\sqrt{\min_{t} \mu_{|\mathcal{K}_{t}|} - \beta K}}} -1 \right)^{2} \times \frac{\sum_{t \in [T]} (\max_{k} p_{k,t})^{2}}{2 (1- \delta')^{2}}  \nonumber \\ 
    & \hspace{0.1in} + \sqrt{2 \log(1/\tilde{\delta})} \left( e^{\frac{c}{\sqrt{\min_{t} \mu_{|\mathcal{K}_{t}|} - \beta K}}} -1 \right)\frac{ \sqrt{\sum_{t \in [T]} (\max_{k} p_{k,t})^{2} }}{1- \delta'}, \label{eq:HeteroComp_simplified}
\end{align}
where step $(a)$ follows from the fact that $e^{x} + 1 \geq 2$, where  $x \geq 0$ and $\log(1 + x) \leq x$. Also, 
\begin{align}
    \delta_{c}^{(T)} &= 1 - (1-\tilde{\delta}) \prod_{t = 1}^{T} (1 - \delta_{c,t}) = 1 - (1-\tilde{\delta}) \prod_{t = 1}^{T} \left(1 - \big(\delta' + \frac{\max_k p_{k,t} \delta_{\ell}}{1-\delta'} \big) \right)
\end{align}
By examining the expression in \eqref{eq:HeteroComp_simplified}, we can see that, for a given $T$, $\min_{t} \mu_{|\mathcal{K}_{t}|} - \beta K$ grows as $K$ increases. Therefore, the exponential term approaches $1$ as $K$ increases, and \eqref{eq:HeteroComp_simplified} goes to $0$ as the number of users increases. For the case when the sampling probability is time-invariant, using existing results in \cite{dwork2010boosting}, it can be readily shown that the total leakage over $T$ iterations of the proposed scheme is $(\epsilon_{c}^{(T)}, T \delta_{c} + \tilde{\delta})$-DP for $\tilde{\delta} \in (0, 1]$ where $\epsilon_{c}^{(T)}= \sqrt{2 T \log(1/\tilde{\delta})} \epsilon_{c} + T \epsilon_{c} (e^{\epsilon_{c}} -1)$.
We can expect the same behavior to hold true for the time-invariant case since the result in \cite{kairouz2015composition} is more general than the result in \cite{dwork2010boosting}. We illustrate the total central privacy leakage for the uniform sampling time-invariant case as a function of $K$ in Fig. \ref{fig:privacyscaling} for various values of $T$. As is clearly evident, the leakage provided by wireless FedSGD goes asymptotically to $0$ as $K\rightarrow \infty$. 

\subsection{Convergence rate of private FL}
In this section, we analyze the performance of  private wireless FedSGD under the assumption that the global loss function $F(\mathbf{w})$ is smooth and strongly convex, and the data across users is i.i.d. Specifically, we consider two scenarios when $(a)$ $\mathcal{K}_t$ is unknown and $(b)$ $\mathcal{K}_t$ is known to the PS. We take both privacy and wireless aggregation into account while deriving the bounds. Interestingly, we show that the unknown $\mathcal{K}_t$ case always outperforms the known $\mathcal{K}_t$ case.
Therefore, it is not necessary for the PS to know $\mathcal{K}_t$. We confirm this observation in the experiment section as well. Due to privacy requirements and noisy nature of wireless channel, the convergence rate is penalized  as shown in the following Theorem. 

\begin{theorem}\label{theorem_1_convergence} 
\vspace{-10pt}
(Unknown $\mathcal{K}_t$ with non-uniform sampling)
Suppose the loss function $F$ is $\lambda$-strongly convex and $\mu$-smooth with respect to $\mathbf{w}^{*}$. Then, for a learning rate $\eta_{t} = 1 / \lambda t$ and a number of iterations $T$, the convergence rate of the private wireless FedSGD algorithm is 
\begin{align}
     \mathds{E} \left[ F(\mathbf{w}_{T}) \right] - F(\mathbf{w}^{*})  \leq  \frac{2 \mu}{ \lambda^{2} T^{2}}  \sum_{t=1}^{T}\left[\frac{L^{2} \big(\mu_{|\mathcal{K}_{t}|}^{2} + \sigma_{|\mathcal{K}_{t}|}^{2}\big) }{\mu_{|\mathcal{K}_{t}|}^{2}} + \frac{d}{ \mu_{|\mathcal{K}_{t}|}^{2} }  \left[  \max_{k} \sigma_{k,t}^{2} \times  \mu_{|\mathcal{K}_{t}|} +N_{0}\right] \right], \label{eq:convergenceBoundNonuniform}
\end{align}
where  $\mu_{|\mathcal{K}_{t}|} = \sum_{k = 1}^{K} p_{k,t}$ and $\sigma_{|\mathcal{K}_{t}|}^{2} = \sum_{k=1}^{K} p_{k,t}(1-p_{k,t})$.
\vspace{-10pt}
\end{theorem}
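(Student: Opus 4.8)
The plan is to treat $\hat{\mathbf{g}}_t$ (for the unknown-$\mathcal{K}_t$ case, $\hat{\mathbf{g}}_t=\mathbf{y}_t/\mu_{|\mathcal{K}_t|}$) as an unbiased stochastic gradient with a controlled second moment, and then feed it into the textbook $\lambda$-strongly-convex SGD recursion. The unbiasedness $\mathds{E}[\hat{\mathbf{g}}_t\mid\mathbf{w}_t]=\nabla F(\mathbf{w}_t)$ I would take from Appendix \ref{appendix:proof_theorem_2}: under the i.i.d.\ data assumption each local minibatch gradient is itself an unbiased estimate of $\nabla F(\mathbf{w}_t)$, so the randomly sampled but \emph{deterministically} renormalized (by $\mu_{|\mathcal{K}_t|}$, not by $|\mathcal{K}_t|$) combination of local gradients stays unbiased even when the $p_{k,t}$'s differ. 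The real content is the second-moment bound, which must absorb the three independent sources of randomness in $\hat{\mathbf{g}}_t$: the random participation set $\mathcal{K}_t$, the artificial Gaussian noises $\{\mathbf{n}_{k,t}\}$, and the channel noise $\mathbf{m}_t$.

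For that bound I would introduce $\xi_{k,t}=\mathbf{1}[k\in\mathcal{K}_t]\sim\mathrm{Bern}(p_{k,t})$ independently across $k$, write $\mu_{|\mathcal{K}_t|}\hat{\mathbf{g}}_t=\sum_k\xi_{k,t}\mathbf{g}_k(\mathbf{w}_t)+\sum_k\xi_{k,t}\mathbf{n}_{k,t}+\mathbf{m}_t$, and observe that the cross terms vanish in expectation because the noises are zero-mean and independent of the sampling. This splits $\mathds{E}\|\mu_{|\mathcal{K}_t|}\hat{\mathbf{g}}_t\|^2$ into three pieces. A bias--variance split over the $\xi_{k,t}$'s gives $\mathds{E}\|\sum_k\xi_{k,t}\mathbf{g}_k\|^2=\|\sum_k p_{k,t}\mathbf{g}_k\|^2+\sum_k p_{k,t}(1-p_{k,t})\|\mathbf{g}_k\|^2\le L^2(\mu_{|\mathcal{K}_t|}^2+\sigma_{|\mathcal{K}_t|}^2)$, using $\|\mathbf{g}_k(\mathbf{w}_t)\|_2\le L$ and the triangle inequality; conditioning on $\mathcal{K}_t$ and using independence of the artificial noises gives $\mathds{E}\|\sum_k\xi_{k,t}\mathbf{n}_{k,t}\|^2=d\sum_k p_{k,t}\sigma_{k,t}^2\le d\,\mu_{|\mathcal{K}_t|}\max_k\sigma_{k,t}^2$; and $\mathds{E}\|\mathbf{m}_t\|^2=dN_0$. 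Dividing through by $\mu_{|\mathcal{K}_t|}^2$ reproduces exactly the bracketed quantity $B_t$ appearing in \eqref{eq:convergenceBoundNonuniform}, and note that $B_t$ is deterministic (the gradient-norm bound is uniform), so it can be pulled straight through any later expectation.

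Finally I would run the standard recursion: expand $\|\mathbf{w}_{t+1}-\mathbf{w}^*\|^2$ from the update $\mathbf{w}_{t+1}=\mathbf{w}_t-\eta_t\hat{\mathbf{g}}_t$, take expectations, use $\mathds{E}[\hat{\mathbf{g}}_t\mid\mathbf{w}_t]=\nabla F(\mathbf{w}_t)$ on the cross term together with strong convexity and optimality of $\mathbf{w}^*$, i.e.\ $\langle\nabla F(\mathbf{w}_t),\mathbf{w}_t-\mathbf{w}^*\rangle\ge\lambda\|\mathbf{w}_t-\mathbf{w}^*\|^2$, and the second-moment bound on the quadratic term, to obtain $a_{t+1}\le(1-2\eta_t\lambda)a_t+\eta_t^2 B_t$ with $a_t\triangleq\mathds{E}\|\mathbf{w}_t-\mathbf{w}^*\|^2$. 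For $\eta_t=1/(\lambda t)$ the contraction factor is $(t-2)/t$ and the products telescope, $\prod_{s=t+1}^{T}\frac{s-2}{s}=\frac{t(t-1)}{T(T-1)}$, so unrolling and bounding $t-1\le t$ gives $a_T=O\!\big(\frac{1}{\lambda^2 T^2}\sum_{t}B_t\big)$; a closing application of $\mu$-smoothness, $\mathds{E}[F(\mathbf{w}_T)]-F(\mathbf{w}^*)\le\frac{\mu}{2}a_T$, yields \eqref{eq:convergenceBoundNonuniform} (up to absolute constants). The main obstacle is the second-moment step, not the recursion: because $1/\mu_{|\mathcal{K}_t|}$ is deterministic while the number of summands in $\mathbf{y}_t$ is random, one must take expectations in the right order (law of total variance over $\mathcal{K}_t$, then over the noises) and check that random participation contributes the extra $L^2\sigma_{|\mathcal{K}_t|}^2/\mu_{|\mathcal{K}_t|}^2$ as \emph{variance} rather than as bias — which is precisely where the i.i.d.\ assumption and the normalization by $\mu_{|\mathcal{K}_t|}=\mathds{E}[|\mathcal{K}_t|]$ come in. Everything after $B_t$ is in hand is the classical strongly-convex SGD argument.
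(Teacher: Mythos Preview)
Your proposal is correct and mirrors the paper's argument: the paper also establishes unbiasedness, bounds $\mathds{E}\|\hat{\mathbf{g}}_t\|^2$ by the same quantity $B_t$ (reaching $L^2\mathds{E}[|\mathcal{K}_t|^2]/\mu_{|\mathcal{K}_t|}^2$ via Cauchy--Schwarz on the cross terms rather than your bias--variance split over the $\xi_{k,t}$'s, but the outcome is identical since $\mathds{E}[|\mathcal{K}_t|^2]=\mu_{|\mathcal{K}_t|}^2+\sigma_{|\mathcal{K}_t|}^2$), and then invokes the strongly-convex SGD bound of Rakhlin et al.\ directly rather than unrolling the recursion as you do. These are presentational differences only; the two proofs are the same in substance.
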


Theorem \ref{theorem_1_convergence} is proved in Appendix \ref{appendix:proof_theorem_2}. From the above result, we observe that the convergence rate depends on: $(a)$ the total number of users $K$, $(b)$ the number of model parameters $d$, $(c)$ worst amount of perturbation noise across user per iteration, and $(d)$ the sampling probabilities $p_{k,t}$s.  
When the $p_t^*$ from \eqref{eqn:central_optimal_sampling} is used, the convergence rate becomes the following.
\begin{corollary} 
\vspace{-10pt}
(Convergence under optimal $p_{t}^{*}$ from  \eqref{eqn:central_optimal_sampling})
Under the same assumptions as Theorem \ref{theorem_1_convergence},
the convergence rate for the case when the optimal sampling probability $p_{t}^*$ from \eqref{eqn:central_optimal_sampling}  is
\begin{align}
     \hspace{-7pt} \mathds{E} \left[ F(\mathbf{w}_{T}) \right] - F(\mathbf{w}^{*})  
     &  \leq  \frac{2 \mu}{ \lambda^{2} T} \left[\frac{L^{2} (  \alpha (\sqrt{K}-1/\sqrt{K})+1 )}{\alpha \sqrt{K} } + \frac{d}{ \alpha^{2} K } \left[ \alpha \sqrt{K} \max_{k,t} \sigma_{k,t}^{2}   +N_{0}\right] \right],
     \label{convergence_equation_p_optimal}
\end{align}
where $\alpha =  2 \sqrt{\frac{1}{2} \log \frac{2}{\delta' }}$.
\vspace{-10pt}
\end{corollary}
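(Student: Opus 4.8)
The plan is to specialize the general bound \eqref{eq:convergenceBoundNonuniform} of Theorem~\ref{theorem_1_convergence} to the time-invariant uniform sampling regime in which every user participates with the optimal probability $p_{k,t} = p_t^{*}$ from \eqref{eqn:central_optimal_sampling}. Throughout I would work in the nontrivial branch $p_t^{*} = \alpha/\sqrt{K}$ with $\alpha = 2\sqrt{\tfrac12 \log(2/\delta')}$, which is the relevant case once $K$ is large enough that $\alpha/\sqrt{K}\le 1$ (otherwise $p_t^{*}=1$ and the bound degenerates to the full-participation setting already covered). This assumption should be stated explicitly.

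First, I would evaluate the two participation moments appearing in \eqref{eq:convergenceBoundNonuniform}. Under uniform sampling $|\mathcal{K}_t|\sim\mathrm{Binomial}(K,p_t^{*})$, so $\mu_{|\mathcal{K}_t|} = K p_t^{*} = \alpha\sqrt{K}$ and $\sigma_{|\mathcal{K}_t|}^{2} = K p_t^{*}(1-p_t^{*}) = \alpha\sqrt{K} - \alpha^{2}$; since the sampling probability does not vary with $t$, both quantities are the same in every iteration. Next I substitute these into the per-iteration summand. The ``signal'' term becomes
\[
L^{2}\,\frac{\mu_{|\mathcal{K}_t|}^{2} + \sigma_{|\mathcal{K}_t|}^{2}}{\mu_{|\mathcal{K}_t|}^{2}}
= L^{2}\,\frac{\alpha^{2}K + \alpha\sqrt{K} - \alpha^{2}}{\alpha^{2}K},
\]
and dividing numerator and denominator by $\sqrt{K}$ rewrites this as $\dfrac{L^{2}\big(\alpha(\sqrt{K}-1/\sqrt{K})+1\big)}{\alpha\sqrt{K}}$, matching the first term in \eqref{convergence_equation_p_optimal}. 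The ``noise'' term becomes $\dfrac{d}{\alpha^{2}K}\big[\alpha\sqrt{K}\,\max_{k}\sigma_{k,t}^{2} + N_{0}\big]$, which I upper-bound by replacing $\max_{k}\sigma_{k,t}^{2}$ with $\max_{k,t}\sigma_{k,t}^{2}$ so that the summand no longer depends on $t$.

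Finally, because the summand is now a single constant independent of $t$, the sum $\sum_{t=1}^{T}$ simply multiplies it by $T$, and the prefactor $\tfrac{2\mu}{\lambda^{2}T^{2}}$ collapses to $\tfrac{2\mu}{\lambda^{2}T}$, yielding exactly \eqref{convergence_equation_p_optimal}. The argument is a direct algebraic substitution into an already-proved bound, so I do not anticipate any genuine obstacle; the only point requiring care is the $\min[1,\cdot]$ in the definition of $p_t^{*}$, i.e., making explicit that the corollary is stated for the regime where the $\alpha/\sqrt{K}$ branch is active (so that $\sigma_{|\mathcal{K}_t|}^{2}>0$ and the expressions above are valid).
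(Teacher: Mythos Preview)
Your proposal is correct and is exactly the approach the paper (implicitly) takes: the corollary is stated without a separate proof because it is obtained by direct substitution of $p_{k,t}=p_t^{*}=\alpha/\sqrt{K}$ into the bound \eqref{eq:convergenceBoundNonuniform} of Theorem~\ref{theorem_1_convergence}, computing $\mu_{|\mathcal{K}_t|}=\alpha\sqrt{K}$, $\sigma_{|\mathcal{K}_t|}^{2}=\alpha\sqrt{K}-\alpha^{2}$, upper-bounding $\max_k\sigma_{k,t}^{2}$ by $\max_{k,t}\sigma_{k,t}^{2}$, and collapsing the now time-invariant sum. Your explicit note that the $\alpha/\sqrt{K}$ branch of the $\min$ must be active is a useful caveat the paper leaves implicit.
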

It can be seen that the constant in front of both bounds scale as $\mathcal{O}(1/T)$. However, the second parts of the expressions depends on the sampling probabilities. We can see from \eqref{convergence_equation_p_optimal} that the first term in the bracket is constant and that the second term scales as $\mathcal{O}(1/\sqrt{K})$. Since $p_t^*$ is obtained when privacy is prioritized, \eqref{convergence_equation_p_optimal} is potentially the worst bound of the two. One can potentially select sampling probabilities for \eqref{eq:convergenceBoundNonuniform} to obtain even better scaling than $\mathcal{O}(1/\sqrt{K})$. We next present the convergence results for the case when $\mathcal{K}_t$ is known at the PS.
 
\begin{theorem}\label{theorem_1_convergence_known_kt} 
\vspace{-10pt}
(Known $\mathcal{K}_t$ with non-uniform sampling)
Suppose the loss function $F$ is $\lambda$-strongly convex and $\mu$-smooth with respect to $\mathbf{w}^{*}$. Then, for a learning rate $\eta_{t} = 1 / \lambda t$ and a number of iterations $T$, the convergence rate of the private wireless FedSGD algorithm is given as
\begin{align}
     \mathds{E} \left[ F(\mathbf{w}_{T}) \right] - F(\mathbf{w}^{*})  \leq  \frac{2 \mu}{ \lambda^{2} T^{2}}  \sum_{t=1}^{T}\left[\frac{L^{2} }{\zeta_t} + \frac{d}{\zeta_{t}^{2} }\left[ \max_{k} \sigma_{k,t}^{2} \times  \mathds{E}\left[ \frac{1}{|\mathcal{K}_{t}|}\right] + \mathds{E} \left[ \frac{1}{{|\mathcal{K}_{t}|}^{2}} \right] N_{0}\right] \right], \label{convergence_equation2}
\end{align}
where $\zeta_{t} = 1 - \prod_{k=1}^{K} (1-p_{k,t})$.
\vspace{-10pt}
\end{theorem}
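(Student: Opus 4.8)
The plan is to run the textbook analysis of SGD for a $\lambda$-strongly convex, $\mu$-smooth objective with step size $\eta_t=1/(\lambda t)$, the only non-standard ingredient being a per-iteration second-moment bound on the unbiased estimate $\hat{\mathbf{g}}_t$ of \eqref{eq:postProcessingKnown}. First I would invoke the unbiasedness $\mathds{E}[\hat{\mathbf{g}}_t\mid\mathbf{w}_t]=\mathbf{g}_t$ --- this is exactly what the normalization by $\zeta_t=1-\prod_k(1-p_{k,t})$, i.e.\ the probability that $|\mathcal{K}_t|\geq 1$, is chosen to achieve, and the computation is the same one already used for Theorem~\ref{theorem_1_convergence} (Appendix~\ref{appendix:proof_theorem_2}). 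Writing $\mathbf{w}^{*}$ for the minimizer so that $\nabla F(\mathbf{w}^{*})=\mathbf{0}$, I expand
\begin{align*}
\|\mathbf{w}_{t+1}-\mathbf{w}^{*}\|_{2}^{2}
= \|\mathbf{w}_{t}-\mathbf{w}^{*}\|_{2}^{2}
- 2\eta_t\langle\hat{\mathbf{g}}_t,\mathbf{w}_t-\mathbf{w}^{*}\rangle
+ \eta_t^{2}\|\hat{\mathbf{g}}_t\|_{2}^{2},
\end{align*}
take expectations, and use $\mathds{E}[\langle\hat{\mathbf{g}}_t,\mathbf{w}_t-\mathbf{w}^{*}\rangle\mid\mathbf{w}_t]=\langle\nabla F(\mathbf{w}_t),\mathbf{w}_t-\mathbf{w}^{*}\rangle\geq\lambda\|\mathbf{w}_t-\mathbf{w}^{*}\|_{2}^{2}$ (strong convexity together with $\nabla F(\mathbf{w}^{*})=\mathbf{0}$). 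Setting $a_t\triangleq\mathds{E}\|\mathbf{w}_t-\mathbf{w}^{*}\|_{2}^{2}$ and $G_t^{2}\triangleq\mathds{E}\|\hat{\mathbf{g}}_t\|_{2}^{2}$, this yields the scalar recursion $a_{t+1}\leq(1-2/t)\,a_t+G_t^{2}/(\lambda^{2}t^{2})$.

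The heart of the proof is the bound on $G_t^{2}$. By \eqref{eq:postProcessingKnown}, conditioned on $\mathcal{K}_t$ the estimate decomposes into three uncorrelated pieces: the signal term $\tfrac{1}{\zeta_t|\mathcal{K}_t|}\sum_{k\in\mathcal{K}_t}\mathbf{g}_k(\mathbf{w}_t)$, the aggregated artificial-noise term $\tfrac{1}{\zeta_t|\mathcal{K}_t|}\sum_{k\in\mathcal{K}_t}\mathbf{n}_{k,t}$, and the channel-noise term $\tfrac{1}{\zeta_t|\mathcal{K}_t|}\mathbf{m}_t$. I would bound the signal part with $\|\mathbf{g}_k(\mathbf{w}_t)\|_2\leq L$ (using the normalization by $L$), the artificial-noise part with $\mathds{E}\|\mathbf{n}_{k,t}\|_2^{2}=d\sigma_{k,t}^{2}\leq d\max_k\sigma_{k,t}^{2}$, and the channel-noise part with $\mathds{E}\|\mathbf{m}_t\|_2^{2}=dN_0$; conditioning on the value of $|\mathcal{K}_t|$ (on the event $|\mathcal{K}_t|\geq1$) and then averaging over it produces the inverse moments $\mathds{E}[1/|\mathcal{K}_t|]$ and $\mathds{E}[1/|\mathcal{K}_t|^{2}]$, giving $G_t^{2}\leq\tfrac{L^{2}}{\zeta_t}+\tfrac{d}{\zeta_t^{2}}\big(\max_k\sigma_{k,t}^{2}\,\mathds{E}[1/|\mathcal{K}_t|]+N_0\,\mathds{E}[1/|\mathcal{K}_t|^{2}]\big)$, which is the bracket in \eqref{convergence_equation2}.

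Finally I would close the recursion: multiplying $a_{t+1}\leq(1-2/t)a_t+G_t^{2}/(\lambda^{2}t^{2})$ by $t(t+1)$ and telescoping (equivalently, a short induction on $t$ starting from $a_1\geq0$) gives $a_T\leq\tfrac{4}{\lambda^{2}T^{2}}\sum_{t=1}^{T}G_t^{2}$, and $\mu$-smoothness at the optimum gives $\mathds{E}[F(\mathbf{w}_T)]-F(\mathbf{w}^{*})\leq\tfrac{\mu}{2}a_T$; substituting the bound on $G_t^{2}$ produces exactly \eqref{convergence_equation2}. The main obstacle is the second-moment computation: unlike Theorem~\ref{theorem_1_convergence}, where the normalization $1/\mu_{|\mathcal{K}_t|}$ is deterministic, here the random factor $1/|\mathcal{K}_t|$ sits inside all three terms, so one must propagate it through the conditioning and is left with the genuinely larger inverse moments $\mathds{E}[1/|\mathcal{K}_t|]\geq 1/\mu_{|\mathcal{K}_t|}$ and $\mathds{E}[1/|\mathcal{K}_t|^{2}]$ (Jensen) --- which is precisely why this bound is looser than \eqref{eq:convergenceBoundNonuniform}. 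A secondary subtlety is being consistent about conditioning on $\{|\mathcal{K}_t|\geq1\}$ both in the unbiasedness claim and in the definition of $\zeta_t$, so that the update is well defined and the scalar recursion above is valid.
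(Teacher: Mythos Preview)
Your proposal is correct and follows essentially the same approach as the paper: establish unbiasedness of $\hat{\mathbf{g}}_t$ via the normalization $\zeta_t=1-\prod_k(1-p_{k,t})$, bound the second moment $G_t^{2}$ by decomposing into the signal, aggregated artificial noise, and channel noise pieces (the random factor $1/|\mathcal{K}_t|$ indeed leaves you with the inverse moments $\mathds{E}[1/|\mathcal{K}_t|]$ and $\mathds{E}[1/|\mathcal{K}_t|^{2}]$), and then plug into the strongly-convex/smooth SGD convergence bound. The only cosmetic difference is that the paper invokes the result of \cite{rakhlin2012making} (equation \eqref{eq:finalconv2}) directly, whereas you re-derive that scalar recursion and its telescoping solution by hand; your handling of the event $\{|\mathcal{K}_t|\geq 1\}$ is also slightly more explicit than the paper's, but the substance is the same.
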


Theorem \ref{theorem_1_convergence_known_kt} depends on $\mathds{E}\left[ \frac{1}{|\mathcal{K}_{t}|}\right]$ and $\mathds{E} \left[ \frac{1}{{|\mathcal{K}_{t}|}^{2}} \right]$. Note that $\mathcal{K}_t$ is a binomial random variable. It is difficult to obtain closed form expressions for $\mathds{E}\left[ \frac{1}{|\mathcal{K}_{t}|}\right]$ and $\mathds{E} \left[ \frac{1}{{|\mathcal{K}_{t}|}^{2}} \right]$. However, it is possible to approximate them using Taylor series approximation, specifically,
we approximate $\mathds{E} \left[ \frac{1}{{|\mathcal{K}_{t}|}} \right]$ using Taylor's series around $\mathds{E}\left[|\mathcal{K}_{t}| \right]$ for upto second degree as follows:
 \begin{align}
     \mathds{E} \left[ \frac{1}{{|\mathcal{K}_{t}|}} \right] &\approx \mathds{E} \left[\frac{1}{\mathds{E}\left[|\mathcal{K}_{t}| \right]} - \frac{1}{\mathds{E}^{2}\left[|\mathcal{K}_{t}| \right]} \big(|\mathcal{K}_{t}| - \mathds{E}\left[|\mathcal{K}_{t}| \right] \big) + \frac{1}{\mathds{E}^{3}\left[|\mathcal{K}_{t}| \right]} (|\mathcal{K}_{t}| - \mathds{E}\left[|\mathcal{K}_{t}| \right])^{2} \right] \nonumber \\ 
     & = \frac{1}{\mu_{|\mathcal{K}_{t}|} } + \frac{\sigma_{|\mathcal{K}_{t}|}^{2}}{\mu_{|\mathcal{K}_{t}|}^{3}}.\label{eq:approximation}
 \end{align}
 Similarly for $\mathds{E} \left[ \frac{1}{{|\mathcal{K}_{t}|}^{2}} \right]$, we approximate it around $\mathds{E}\left[|\mathcal{K}_{t}| \right]$ as follows: 
 \begin{align}
       \mathds{E} \left[ \frac{1}{{|\mathcal{K}_{t}|}^{2}} \right] \approx  \frac{1}{\mu_{|\mathcal{K}_{t}|}^{2} } + \frac{3\sigma_{|\mathcal{K}_{t}|}^{2}}{\mu_{|\mathcal{K}_{t}|}^{4}}.\label{eq:approximation2}
 \end{align}
 
By plugging \eqref{eq:approximation} and \eqref{eq:approximation2} back to Theorem \ref{theorem_1_convergence_known_kt} for the uniform sampling case, and setting $p_{k,t}=p,\forall k,t$, $\mu_{|\mathcal{K}_{t}|}^{2}=Kp$ and $\sigma_{|\mathcal{K}_{t}|}^{2}=Kp(1-p),\forall t$, we obtain,
\begin{align}
     \mathds{E} \left[ F(\mathbf{w}_{T}) \right] - F(\mathbf{w}^{*})  \leq  \frac{2 \mu}{ \lambda^{2} T} \left[\frac{L^{2} }{\zeta} + \frac{d}{Kp\zeta^{2} } \left[ \max_{k,t} \sigma_{k,t}^{2} \times  (1 + (1-p)^2) + (1 + 3(1-p)^2) \frac{N_{0}}{Kp}\right] \right], \nonumber
\end{align}
where $\zeta=1-(1-p)^K$.

\begin{figure}[t]
\centering
    \begin{minipage}{.45\textwidth}
	\centering
	 \subcaptionbox{Theoretical bounds on optimality gap.}
	{\includegraphics[width=.95\linewidth]{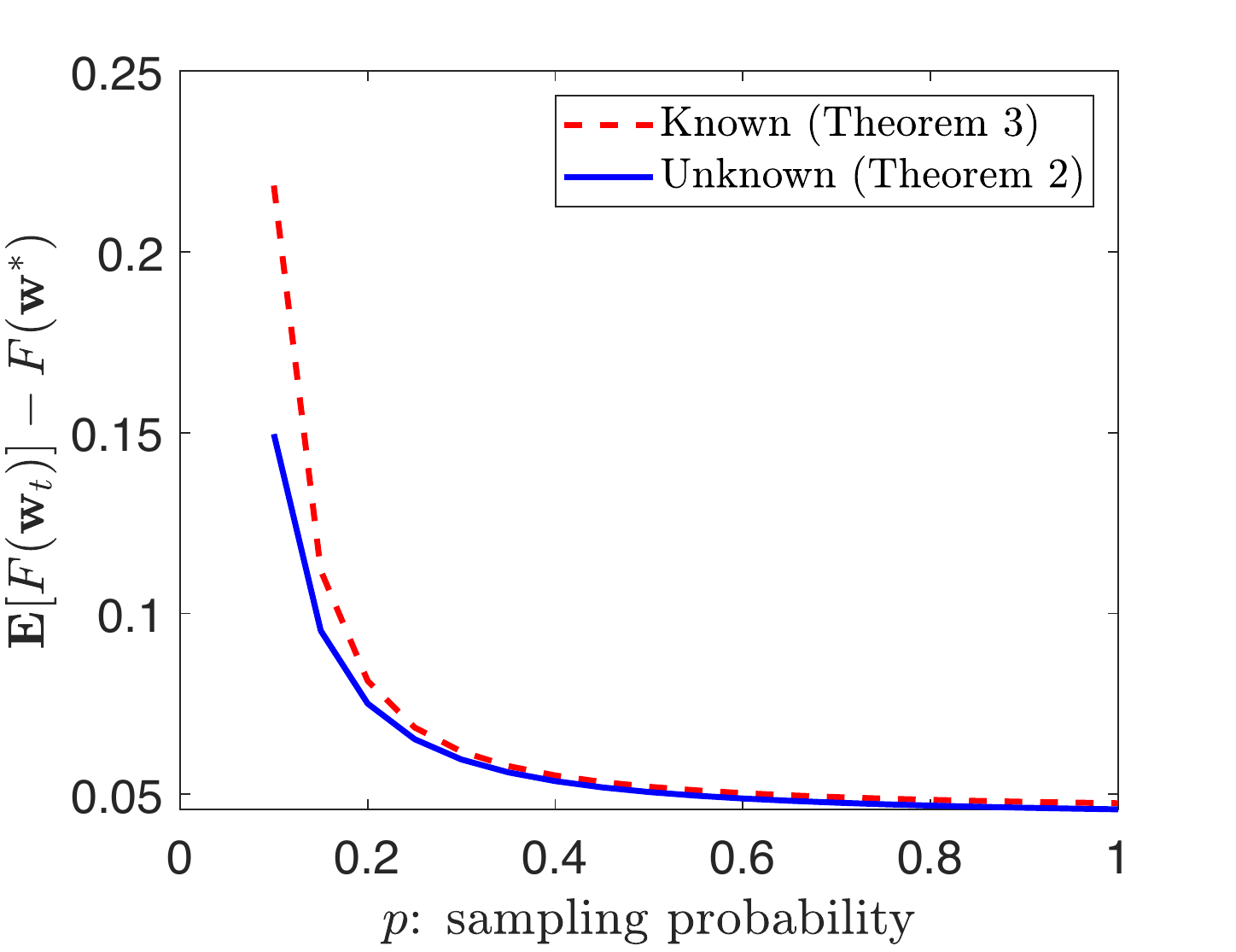}}
	\end{minipage}
	\begin{minipage}{.45\textwidth}
	\centering
	 \subcaptionbox{Empirical.}
	{\includegraphics[width=\linewidth]{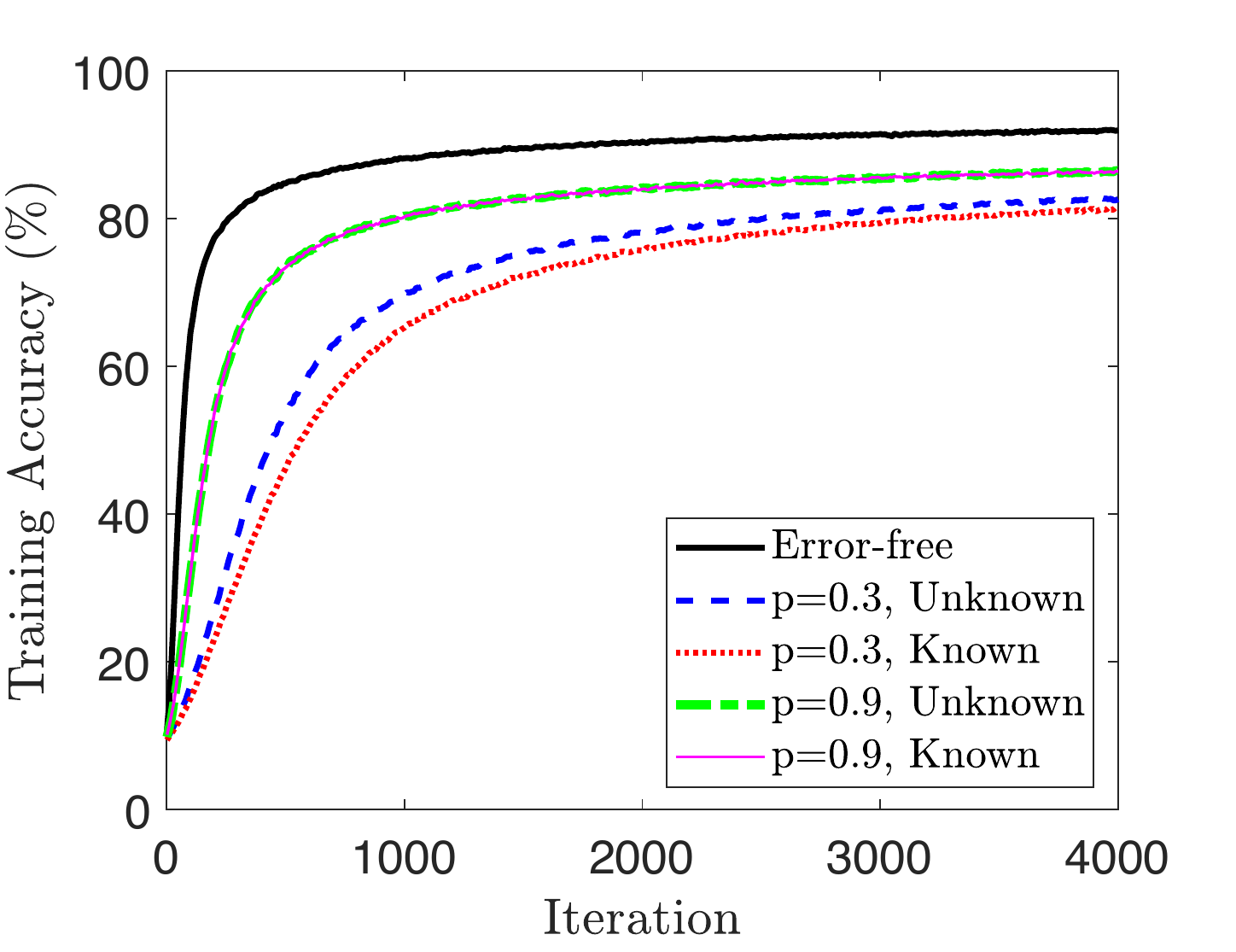}}
	\end{minipage}
	\vspace{-15pt}
    \caption{Comparisons of convergence bounds and training accuracy with uniform sampling for both cases: $(1)$ unknown $\mathcal{K}_{t}$ or $(2)$ known $\mathcal{K}_{t}$, where $K=20$, $L=2$, $T = 4000$, $\lambda = 0.2$, $\mu = 0.9$, $d =30$, $N_{0} = 1$, $\sigma_{k, t}^2=0.1$ and $\delta_{\ell} = \delta' = 10^{-5}$. Each user has transmit $\text{SNR}_k=10$ dB and $(b)$ is trained on MNIST dataset.}
    \vspace{-30pt}
    \label{fig:theoretical_bounds_convergence}
\end{figure}

We note that this bound behaves similarly to the bound in Theorem \ref{theorem_1_convergence} with $p_{k,t}=p,\forall k,t$ when either $T$ or $K$ is large.
Therefore, the proposed scheme performs similarly when $\mathcal{K}_t$ is known or unknown. This can be seen in Fig. \ref{fig:theoretical_bounds_convergence} where the curves are obtained for $K=200$ users, and $T=4000$ iterations. We also show this empirically in Fig. \ref{fig:theoretical_bounds_convergence} using MNIST dataset. It can be seen that for the same sampling probability $p$, schemes with unknown $\mathcal{K}_t$ are always better than schemes with known $\mathcal{K}_t$. The difference between two approaches is only at the scaling of the aggregated gradient. This observation indicates that as long as the direction of the aggregated gradient is preserved and the scaling is not drastically different, the performance of the SGD algorithm will not deviate much \cite{ajalloeian2020analysis}. This is due to the fact that the magnitude of the gradient at a particular iteration is always corrected in the following iterations as long as the direction is correct. Therefore, it might not be necessary to ask users to coordinate among themselves to preserve privacy as claimed in \cite{hasircioglu2020private}.



\section{Experiments} \label{experiments}
In this section, we conduct experiments to assess the performance of the  wireless FedSGD with user sampling on MNIST dataset for image classification. We model the instances of fading channels $h_{k,t}$'s via an autoregressive (AR) Rician model \cite{tse2005fundamentals}, where the Rician parameter $\Gamma=5$ and the temporal correlation coefficient $\rho=0.1$. 
The channel noise variance (receiver noise) is set as $N_{0}=1$.  The user's transmit signal-to-noise ratio is defined as $\text{SNR}_{k} = \frac{P_{k}}{d N_{0}} $.
We use $\sigma_{k,t}^2=0.1$ as the perturbation noise. 
Prior to sending the local gradient to the PS, each user clips the local gradient using the Lipschitz constant chosen empirically with test runs. We use $\delta_{\ell}=10^{-5}$ and $\delta'=2e^{-2\mu_{|\mathcal{K}_{t}|}^2/K}+10^{-5}$ to satisfy the constraint on $\delta'$ and to avoid it from going to $0$. We consider two different sampling schemes described as follows,

\begin{figure}[t]
\centering
	\begin{minipage}{.45\textwidth}
	\centering
	 \subcaptionbox{$L=1$, $T=400$.\label{fig:MNISTiidUniformSingleL1}}
	{\includegraphics[width=\linewidth]{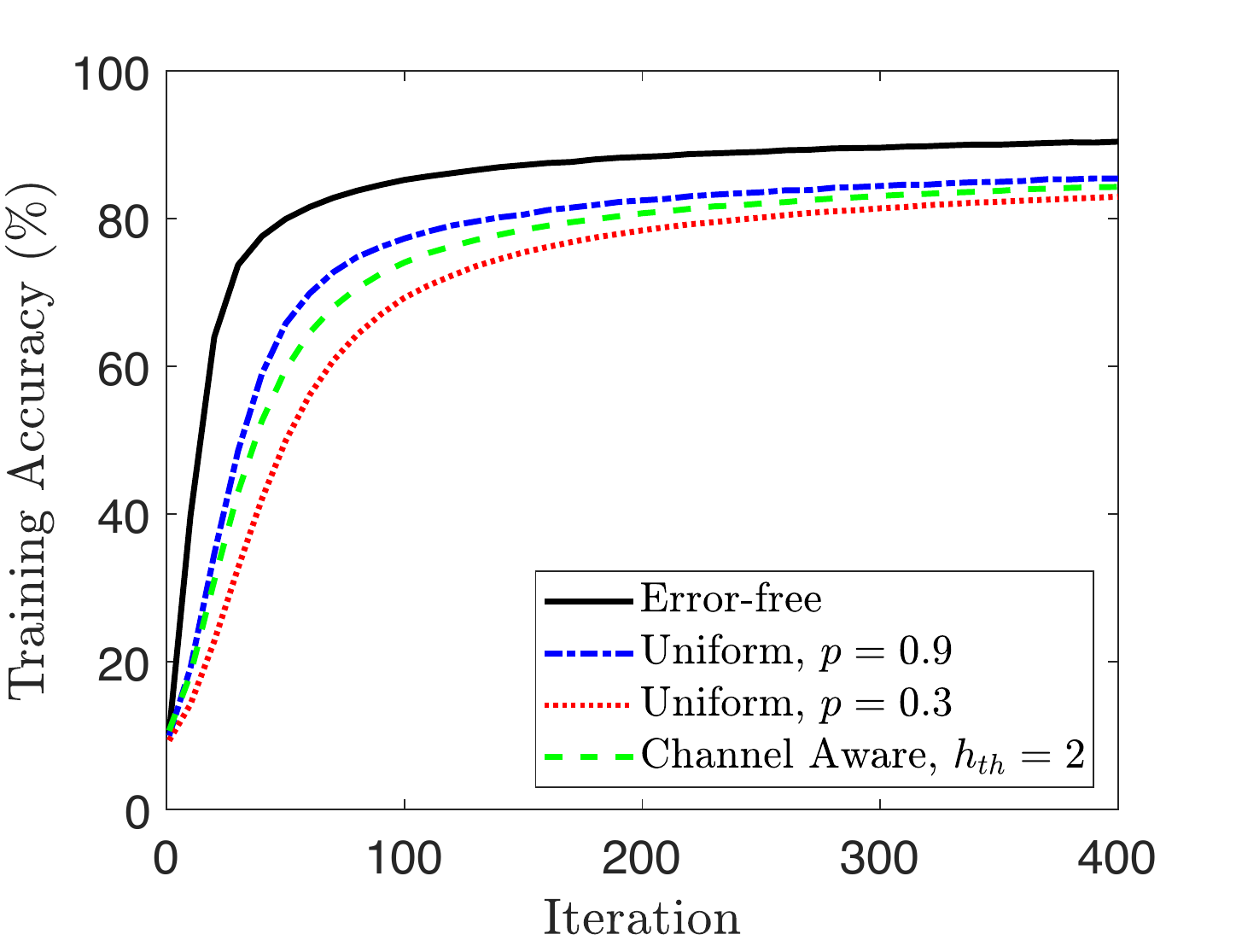}}
	
	\end{minipage}
	\begin{minipage}{.45\textwidth}
	\centering
	 \subcaptionbox{$L=0.1$, $T=2500.$\label{fig:MNISTiidUniformSingleL01}}
	{\includegraphics[width=\linewidth]{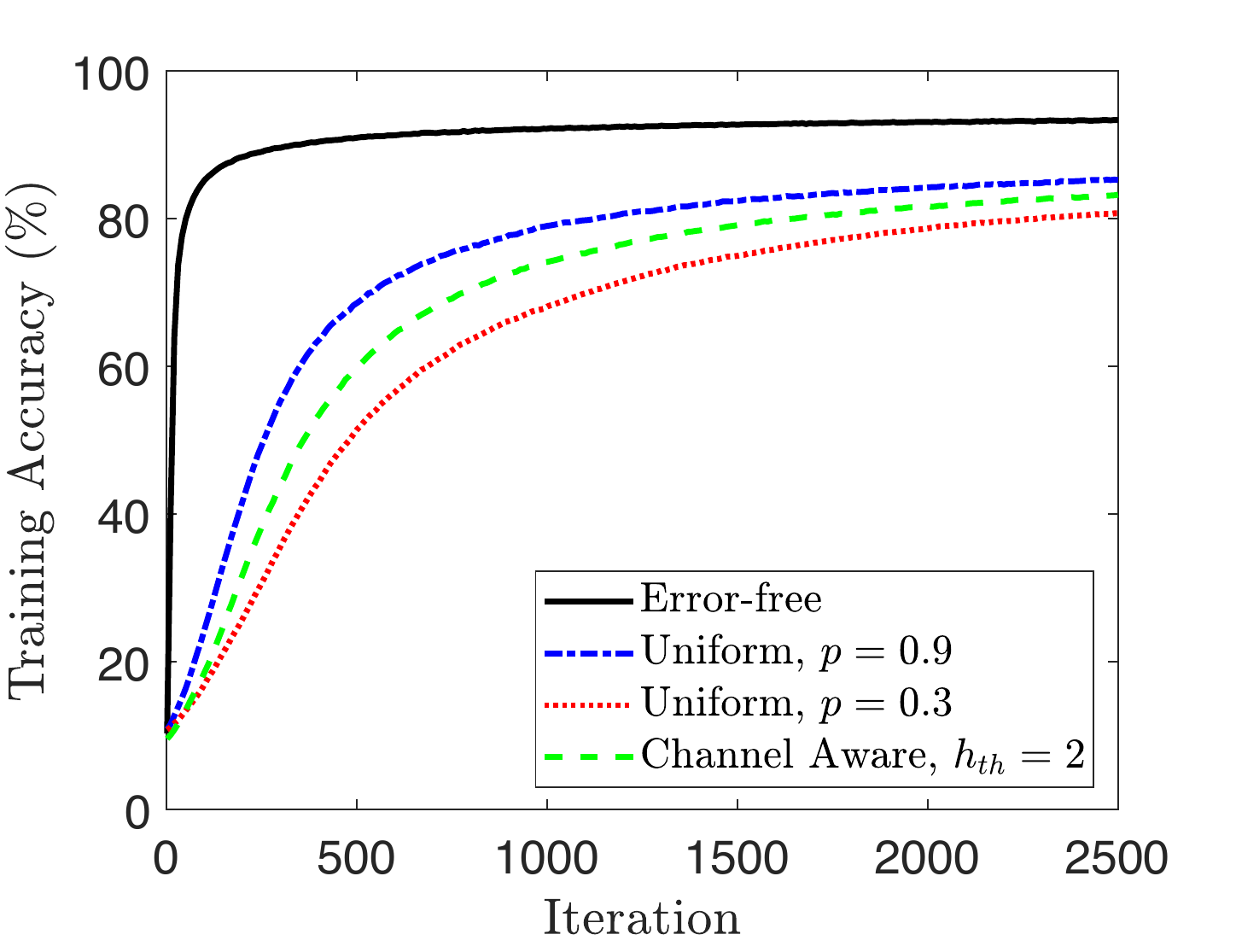}}
	\end{minipage}
	\vspace{-15pt}
	\caption{The impact of the sampling probability on the training accuracy for single-layer neural network trained on MNIST dataset with $\sigma_{k,t}^2=0.1$.
	\label{fig:MNISTUniformSingle}
	}
	\vspace{-10pt}
\end{figure}
\begin{table}[t]
 \centering
  \begin{tabular}{|c|c|c|c|}
         \hline
     & Channel Aware & \multicolumn{2}{c|}{Uniform}\\
     \cline{2-4}
     & $h_{\text{th}}=2$ & $p=0.3$ & $p=0.9$\\
     \hline
$\epsilon_{\ell,\max}$  & $3.675$ & $5.124$ & $2.46$\\
$\epsilon_{c,\max}$  & $4.535$ & $5.61$ & $3.132$\\
Avg. $|\mathcal{K}|$ & $96$ & $60$ & $180$\\
Testing Acc.  & $85.27\%$ & $83.98\%$ & $86.42\%$\\
         \hline
         \multicolumn{4}{c}{\textbf{(a)} $L=1,T=400.$}
 \end{tabular}
 \quad
 \begin{tabular}{|c|c|c|c|}
         \hline
     & Channel Aware & \multicolumn{2}{c|}{Uniform}\\
     \cline{2-4}
     & $h_{\text{th}}=2$ & $p=0.3$ & $p=0.9$\\
    \hline
$\epsilon_{\ell,\max}$  & $0.3677$ & $0.5124$ & $0.2460$\\
$\epsilon_{c,\max}$  & $0.3642$ & $0.2258$ & $0.2317$\\
Avg. $|\mathcal{K}|$ & $96$ & $60$ & $180$\\
Testing Acc.  & $84.33\%$ & $81.76\%$ & $86.25\%$\\
    \hline
    \multicolumn{4}{c}{\textbf{(b)} $L=0.1,T=2500.$}
 \end{tabular}
 \vspace{-10pt}
 \caption{Comparison of privacy leakage per iteration for single-layer neural network with $\sigma_{k,t}^2 = 0.1$. $\epsilon_{\ell, \max}$ and $\epsilon_{c, \max}$ denote the maximum local and central leakages across iterations, respectively.
 \label{tbl:estimate}}
 \vspace{-30pt}
\end{table}

\begin{figure}[t]
\centering
	\begin{minipage}{.45\textwidth}
	\centering
	 \subcaptionbox{$L=1$.\label{fig:MNISTiidUniformTwoL1}}
	{\includegraphics[width=\linewidth]{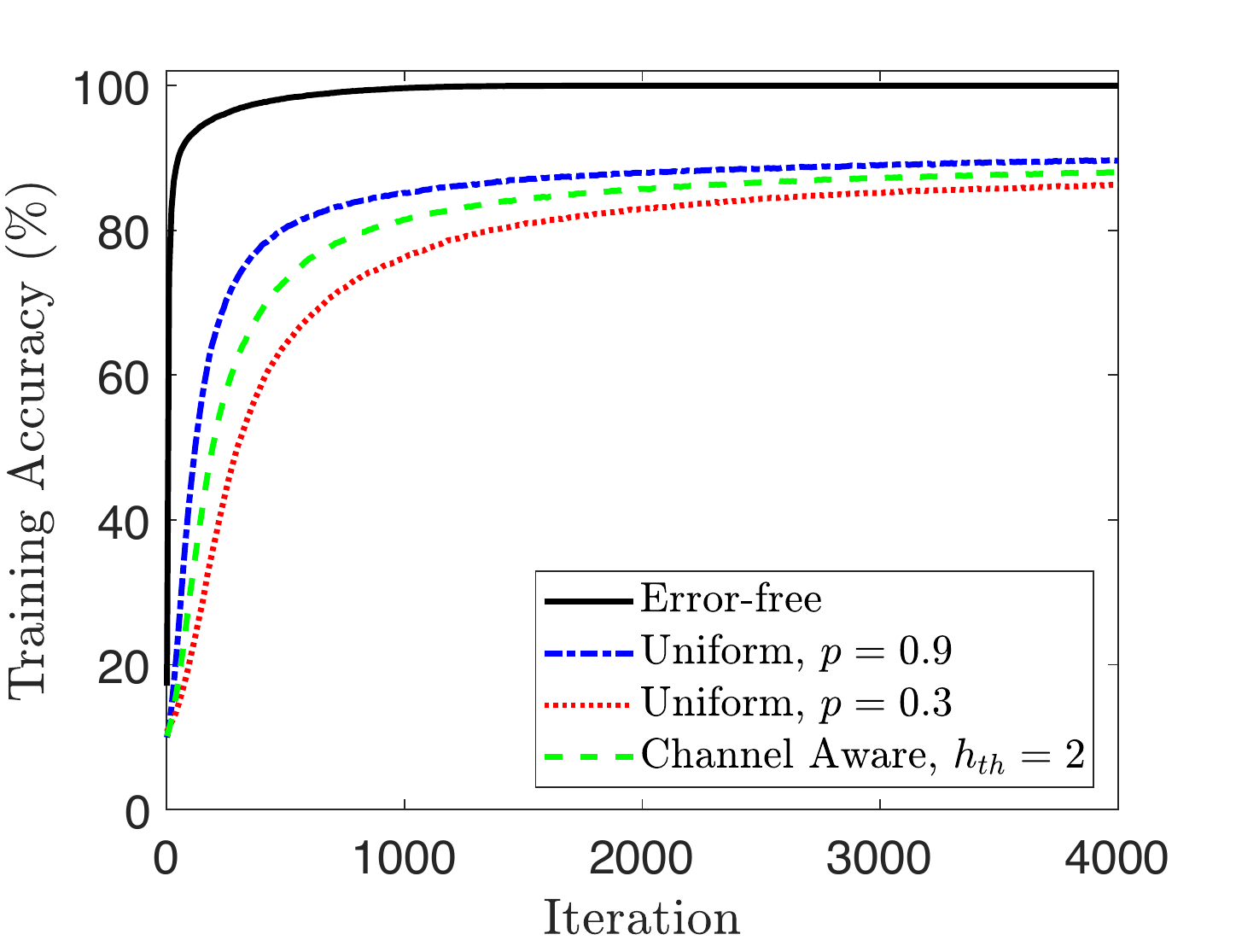}}
	
	\end{minipage}
	\begin{minipage}{.45\textwidth}
	\centering
	 \subcaptionbox{$L=0.2$.\label{fig:MNISTiidUniformTwoL02}}
	{\includegraphics[width=\linewidth]{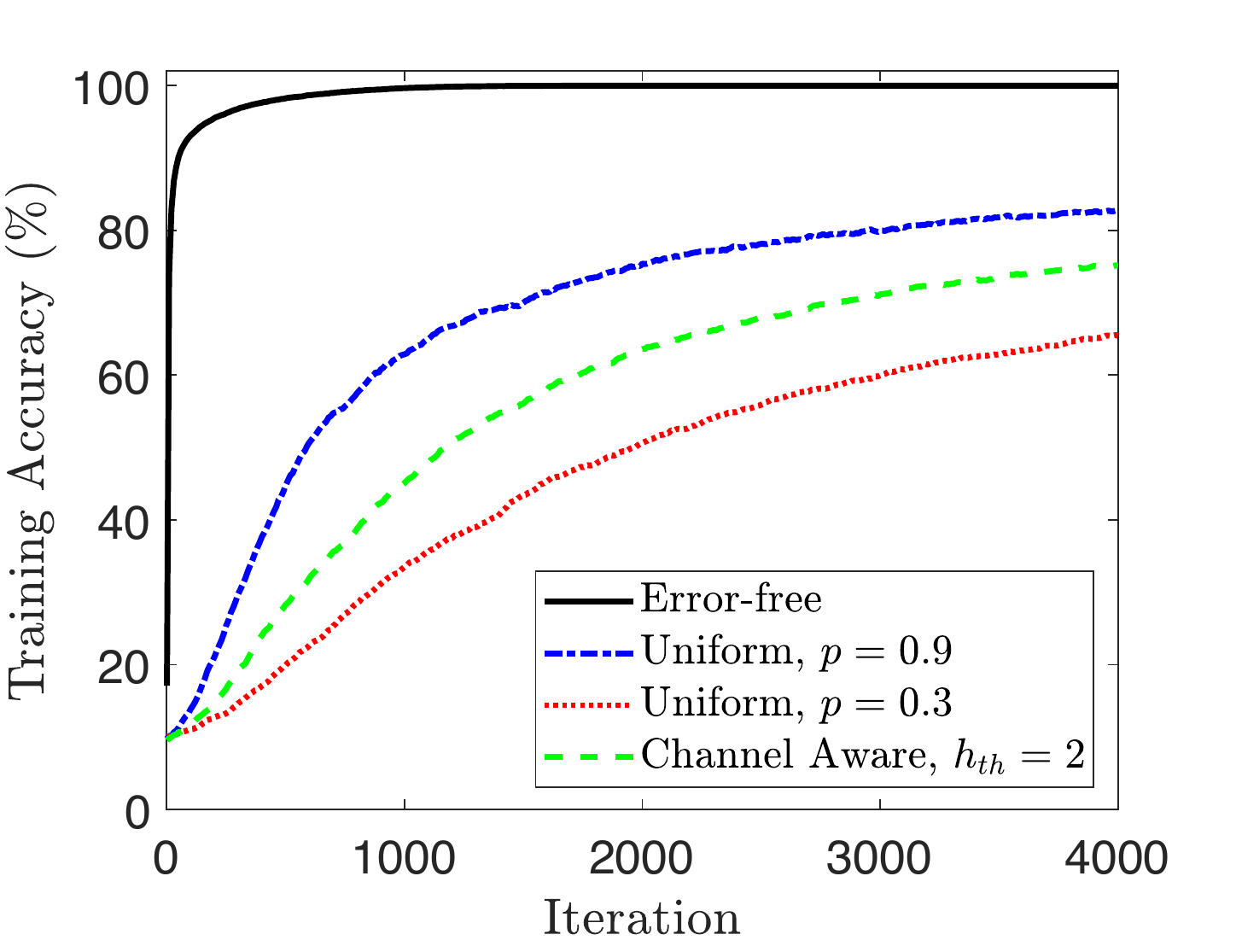}}
	\end{minipage}
	\vspace{-10pt}
	\caption{The impact of the sampling probability on the training accuracy for two-layer neural network trained on MNIST dataset with $\sigma_{k,t}^2=0.8$.
	\label{fig:MNISTUniformTwo}
	}
	\vspace{-10pt}
\end{figure}

\begin{table}[t]
 \centering
  \begin{tabular}{|c|c|c|c|}
         \hline
     & Channel Aware & \multicolumn{2}{c|}{Uniform}\\
     \cline{2-4}
     & $h_{\text{th}}=2$ & $p=0.3$ & $p=0.9$\\
     \hline
$\epsilon_{\ell,\max}$  & $1.390$ & $2.084$ & $0.8953$\\
$\epsilon_{c,\max}$  & $1.991$ & $1.653$ & $1.487$\\
Avg. $|\mathcal{K}|$ & $96$ & $60$ & $180$\\
Testing Acc.  & $88.72\%$ & $87.10\%$ & $90.28\%$\\
         \hline
         \multicolumn{4}{c}{\textbf{(a)} $L=1.$}
 \end{tabular}
 \quad
 \begin{tabular}{|c|c|c|c|}
         \hline
     & Channel Aware & \multicolumn{2}{c|}{Uniform}\\
     \cline{2-4}
     & $h_{\text{th}}=2$ & $p=0.3$ & $p=0.9$\\
    \hline
$\epsilon_{\ell,\max}$  & $0.2795$ & $0.4169$ & $0.1791$\\
$\epsilon_{c,\max}$  & $0.2620$ & $0.1505$ & $0.1633$\\
Avg. $|\mathcal{K}|$ & $96$ & $60$ & $180$\\
Testing Acc.  & $75.89\%$ & $66.33\%$ & $83.68\%$\\
    \hline
    \multicolumn{4}{c}{\textbf{(b)} $L=0.2.$}
 \end{tabular}
 \vspace{-10pt}
 \caption{Comparison of privacy leakage per iteration for two-layer neural network with $\sigma_{k,t}^2 = 0.8$.
 \label{tbl:estimate2}}
 \vspace{-30pt}
\end{table}

\begin{figure}[t]
\centering
	\begin{minipage}[t]{.45\textwidth}
	\centering
	{\includegraphics[width=0.98\linewidth]{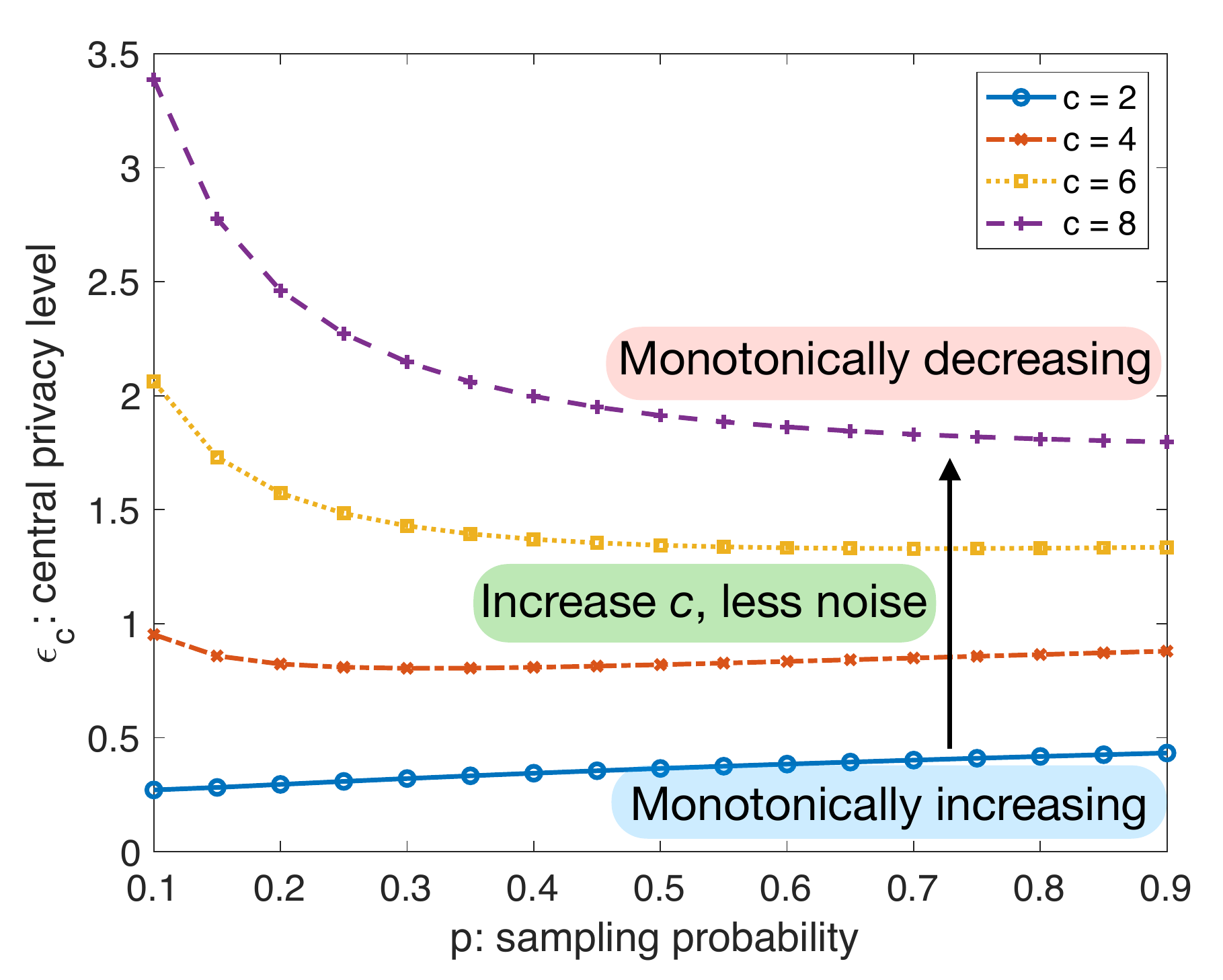}}
	\caption{Central DPs as a function of $p$ for different values of $c$ with $K=20$. 
	\label{fig:epsilonCBehavior}}
	\end{minipage}
	\quad
	\begin{minipage}[t]{.45\textwidth}
	\centering
	{\includegraphics[width=\linewidth]{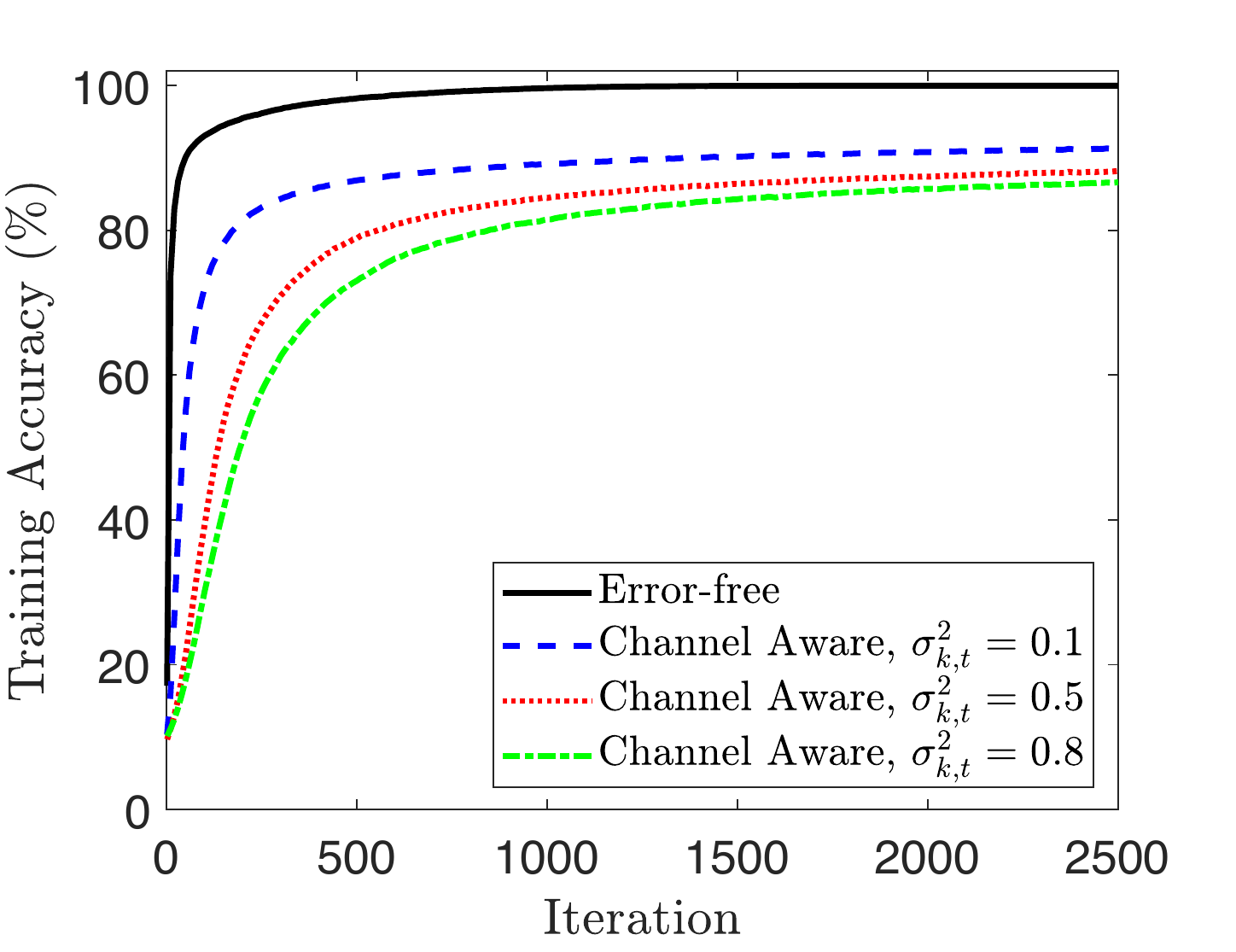}}
	\caption{The impact of the perturbation noise on training accuracy for two-layer NN. \label{fig:HiddenLayerNoiseImpact}}
	\end{minipage}
\vspace{-30pt}
\end{figure}


         


\textbf{Uniform Sampling:} Let $p_{k, t}=p,~\forall k, t$ for any $p$. 

\textbf{Channel Aware Sampling:} Each user computes $p_{k, t}= h_{k,t}/h_{\text{th}}$, where the threshold $h_{\text{th}}$ is a hyperparameter which is optimized via cross-validation. 

We train two models: $(a)$ a single-layer neural network (NN) (with no hidden layer) and $(b)$ a two-layer NN (with one hidden layer), using MNIST dataset, which consists of $60,000$ training and $10,000$ testing samples. The loss function we used is cross-entropy, and ADAM optimizer for training with a learning rate of $\eta=0.001$. The training samples are evenly and randomly distributed across $K=200$ users. Users are split into three groups where the first group  consists of $68$ users with $\text{SNR}_k=2$ dB; the second and third group consist of $66$ users in each group with $\text{SNR}_k=10$ and $30$ dB, respectively. We use $h_{\text{th}}=2$ as the threshold for the channel aware sampling scheme. Empirically, the scaling factor is computed as follows,
\begin{align}
\alpha_{k,t} = \min \left[ \frac{1}{h_{k,t}}, \frac{\sqrt{P_{k}}}{\sqrt{\Vert \mathbf{g}_{k}(\mathbf{w}_{t}) \Vert^{2} + d \sigma_{k,t}^{2}}} \right].
\end{align}

In Fig. \ref{fig:MNISTUniformSingle} and \ref{fig:MNISTUniformTwo}, we show the impact of sampling probability on the training accuracy. First, we observe that a higher $p$ leads to a higher accuracy for the model. Next, in Table \ref{tbl:estimate}$(a)$, we observe that, for the uniform case with $L=1$, the central DP leakage decreases as $p$ increases, which contradicts with the intuition that higher $p$ leads to higher leakage. However, let $p_{k,t}=p,\forall k,t$ in \eqref{eq:centralNonUni}, i.e.,
\begin{align}
    \epsilon_{c, t} &\leq \log \left[ 1 + \frac{p}{1-\delta'} \left( e^{\frac{c}{\sqrt{K(p - \beta)}}} -1 \right)  \right],
\end{align}
we can see that the behavior of $\epsilon_{c,t}$ depends on two terms: $p/(1-\delta')$ and $\exp(c/\sqrt{K(p - \beta)})$. As $p$ increases, the first term increases and the second term decreases. For a certain range of $c$, the second term dominates, therefore, $\epsilon_{c,t}$, as a whole, decreases. This is due to the fact that, since perturbation noises get aggregated over the wireless channel, the privacy is enhanced. Hence, users are encouraged to participate more when $c$ belongs to this range. In general, $c$ depends on $\sigma_{k,t}, L, \delta_{\ell}$, and $c$ for Fig. \ref{fig:MNISTiidUniformSingleL1} and Table \ref{tbl:estimate}$(a)$ falls in the range that allows the second term to dominate as $p$ increases. We also demonstrate the case when the first term dominates, i.e., $L=0.1$ for this set of parameters. We can see that the central DP leakage increases as $p$ increases from Table \ref{tbl:estimate}$(b)$. When $c$ is in this range, the amplification of privacy is not enough to outweigh the disadvantage of participating more. Thus, the intuition that higher $p$ leads to higher leakage holds. This can also be seen in Fig. \ref{fig:epsilonCBehavior} that the first term dominates when $c=2$ and the second term dominates when $c=4,6,8$. Similar trends can be found in Table \ref{tbl:estimate2}.

From Table \ref{tbl:estimate}, we can also see that channel aware sampling achieves $85.27\%$ and $84.33\%$ testing accuracy, which is lower than those of uniform sampling with $p=0.9$. This is due to the choice of $h_{\text{th}}$. By reducing $h_{\text{th}}$, we can improve the accuracy of the channel aware sampling.
Another interesting observation is that, while channel aware sampling suffers slightly from higher central DP leakages, it does achieve relatively high testing accuracy and low LDP leakage with significant less average number of participants compare to uniform sampling with $p=0.9$.

\section{Conclusion \& Future Directions }
\label{conclusion}
In this work, we showed the privacy benefits of user sampling and wireless aggregation for federated learning. More specifically, we showed that for certain settings (when $c$ is relatively small), the benefit of user sampling outweighs the advantage of wireless aggregation, therefore, creating tension between central DP, local DP and convergence rate. To minimize central DP, user sampling is essential, and we can tradeoff local DP and convergence rate for central DP by sampling less. However, for other settings (when $c$ is relatively large), the privacy amplification from wireless aggregation outweighs the disadvantage of additional leakage from sampling more, making the tension between central DP, local DP and convergence rate disappear. Hence, user sampling is, in fact, discouraged to minimize central DP. The resulting leakage for central DP was shown to scale as $O(1/K^{3/4})$, improving upon prior results on this topic. We also showed that knowing only the statistics of the number of participants at each iteration is at least as good as knowing the exact number of participants and hence eliminating the need for coordination between the PS and users. As a future work, one immediate direction would be to study other variations of FL such as FedAvg, where each user performs local model updates through multiple SGD computations, followed by model exchange with the PS. Another interesting direction would be to consider scenarios where the sampling probabilities can depend on the local gradients/losses. 

\bibliographystyle{IEEEtran}
\bibliography{myreferences}

\appendices
\renewcommand{\thesectiondis}[2]{\Alph{section}:}
\section{Gaussian Mechanism for LDP\label{appendix:LDPdefinition}}
In this paper, we assume that each user's local perturbation noise is drawn from Gaussian distribution. This well-known technique is known as Gaussian mechanism and can provide rigorous privacy guarantees for LDP.

\begin{definition} (Gaussian Mechanism \cite{dwork2014algorithmic}) Suppose a user wants to release a function $f(X)$ of an input $X$ subject to $(\epsilon_{\ell}, \delta_{\ell})$-LDP. The Gaussian release mechanism is defined as $M(X) \triangleq f(X) + \mathcal{N}(0, \sigma^{2} \mathbf{I})$.
If the sensitivity of the function is bounded by $\Delta_f$, i.e., $\| f(x) - f(x') \|_{2}\leq \Delta_f$, $\forall x$, then for any $\delta_{\ell} \in (0,1]$, Gaussian mechanism satisfies $(\epsilon_{\ell}, \delta_{\ell})$-LDP, where $\epsilon_{\ell} = \frac{\Delta_{f}}{\sigma} \sqrt{2 \log \left(\frac{1.25}{\delta_{\ell}}\right)}$.
\end{definition} 

\section{Proof of Theorem \ref{theorem1_central_nonUniform}}\label{appendix:central_privacy_analysis}

In this section, we prove the privacy amplification due to non-uniform sampling of the users. For the per-iteration analysis, we drop the iteration index $t$ for brevity. Let $Y$ denote the output seen at the PS through MAC and $Y_{-k}$ denote the output when user $k$ does not participate. Recall that DP guarantees that any post-processing done on the received signal does not leak more information about the input. Therefore, it is sufficient to show the following,
\begin{align}
    \operatorname{Pr}(Y \in \mathcal{S}) \leq e^{\epsilon_{c}} \operatorname{Pr}(Y_{-k} \in \mathcal{S}) + \delta_c,~\forall k,\label{eq:genericEpsilonC}
\end{align}
and obtain $\epsilon_c$. The challenge of this proof is the random participation of users and that the local noises get aggregated over the wireless channel. In this case, let $\mathcal{K}$ denote the random set of users that participate in an iteration, and let $R=|\mathcal{K}|$ denote the random variable representing the number of participants. One can readily check that $R$ is a summation of $K$ Bernoulli random variables and has mean $\mu_R=\sum_{k=1}^{K} p_{k}$, where $p_{k}$ is the sampling probability of user $k$. The number of participants $R = |\mathcal{K}|$ determines the amplification of local DP via wireless aggregation, and in turn, determines the central DP. To take all possible $\mathcal{K}$ into account for the analysis, we condition the lefthand side of \eqref{eq:genericEpsilonC} with the event that $\mathcal{K}$ deviates from the mean, i.e., $|R - \mu_R | \geq \beta K$ for any $\beta > 0$, and bound it using Hoeffding's inequality and local DP guarantee. To apply local DP guarantee, we need additional conditioning on the event $\mathcal{E}_{k}$ that denotes the event where user $k$ participates in the training, i.e., $k \in \mathcal{K}$. Note that $p_k=\operatorname{Pr}(\mathcal{E}_{k}),\forall k$ and the conditional probabilities $\bar{p}_k=\operatorname{Pr}(\mathcal{E}_{k}||R - \mu_R | \geq \beta K),\forall k$ can be readily bounded by $p_k$'s using total probability theorem and Hoeffding's inequality, i.e., one can show that $\bar{p}_k\leq p_k/(1-\delta')$. For any $k\in [K]$, we have the following inequalities:
\begin{align}
    \operatorname{Pr}(Y \in \mathcal{S}) & = \begin{aligned}[t]&\operatorname{Pr}(|R - \mu_R | \geq \beta K) \operatorname{Pr}(Y \in \mathcal{S} | |R - \mu_R | \geq \beta K)\\ +& \operatorname{Pr}(|R - \mu_R | < \beta K) \operatorname{Pr}(Y \in \mathcal{S} | |R - \mu_R | < \beta K) \end{aligned}\nonumber \\ 
    & \leq \delta' \times 1 + \operatorname{Pr}(|R - \mu_R | < \beta K) \operatorname{Pr}(Y \in \mathcal{S} | |R - \mu_R | < \beta K), \label{eq:centralProofStep}
\end{align}
where the inequality follows from the fact that any probability is upper bounded by $1$ and from the Lemma below:
\begin{lemma}
(Hoeffding's Inequality for Binomial Random Variable) For a binomial random variable $X$ with $K$ trials and mean $\mu_X$, the probability that $X$ deviates from the mean by more than $\beta K$ can be bounded as,
\begin{align}
    \operatorname{Pr} (|X - \mu_{X}| \geq \beta K) \leq 2 e^{- 2 \beta^{2} K} \triangleq \delta', 
\end{align}
for any $\beta >0$, and any $\delta'\in [0,1)$.
\label{lemma:Hoeffding}
\end{lemma}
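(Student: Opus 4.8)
This statement is a specialization of the classical Hoeffding inequality, so the plan is to derive it from the Chernoff/exponential-moment argument rather than invoke a black box. First I would write $X = \sum_{i=1}^{K} X_i$ as a sum of $K$ independent Bernoulli summands with $X_i \in \{0,1\}$ (in the uniform-sampling regime the $X_i$ are i.i.d.\ with parameter $p$; more generally $X_i \sim \mathrm{Bern}(p_i)$ with $\mu_X = \sum_{i=1}^{K} p_i$, and Hoeffding's argument needs only independence and boundedness, not identical distributions). It then suffices to bound each one-sided tail separately and add them.

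For the upper tail, fix $s>0$ and apply Markov's inequality to the nonnegative random variable $e^{s(X-\mu_X)}$:
\begin{align}
  \operatorname{Pr}(X - \mu_X \geq \beta K) \;\leq\; e^{-s\beta K}\, \mathds{E}\!\left[e^{s(X-\mu_X)}\right] \;=\; e^{-s\beta K} \prod_{i=1}^{K} \mathds{E}\!\left[e^{s(X_i - p_i)}\right].
\end{align}
The key estimate is Hoeffding's lemma: since $X_i - p_i$ is mean-zero and supported in an interval of length $1$, its moment generating function satisfies $\mathds{E}[e^{s(X_i-p_i)}] \leq e^{s^2/8}$, which follows from convexity of $x\mapsto e^{sx}$ on $[0,1]$ together with a second-order Taylor bound on the resulting log-MGF. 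Substituting gives $\operatorname{Pr}(X-\mu_X \geq \beta K) \leq e^{-s\beta K + K s^2/8}$, and minimizing the exponent over $s$ (optimum at $s=4\beta$) yields $e^{-2\beta^2 K}$. Running the identical argument for $-X$ gives $\operatorname{Pr}(\mu_X - X \geq \beta K) \leq e^{-2\beta^2 K}$, and a union bound over the two events produces $\operatorname{Pr}(|X-\mu_X| \geq \beta K) \leq 2 e^{-2\beta^2 K}$, i.e.\ the claimed bound with $\delta' \triangleq 2 e^{-2\beta^2 K}$. The range $\delta'\in[0,1)$ is simply the non-vacuous regime $\beta \geq \sqrt{\tfrac{1}{2K}\log 2}$, matching how $\delta'$ is used downstream in the excerpt.

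There is no real obstacle here: the only step that requires care is Hoeffding's lemma itself (the sub-Gaussian constant $(b-a)^2/8$ for bounded centered variables), and the only modeling point worth flagging is that the $X$ arising from the sampling scheme is a sum of \emph{independent but not necessarily identically distributed} Bernoullis, which the argument above already covers since it never uses $p_i = p_j$. If a more compact write-up is preferred, one can instead cite the standard two-sided Hoeffding inequality for independent $[0,1]$-valued summands, $\operatorname{Pr}\!\big(|\sum_{i} X_i - \mathds{E}[\sum_{i} X_i]| \geq t\big) \leq 2\exp(-2t^2/K)$, and specialize to $t = \beta K$.
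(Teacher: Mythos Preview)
Your proof is correct. The paper does not actually prove this lemma; it simply states it as the standard two-sided Hoeffding bound and uses it as a black box in the proof of Theorem~\ref{theorem1_central_nonUniform}. Your Chernoff/exponential-moment derivation is the standard one, and you are right to flag that the argument only needs independence and $[0,1]$-boundedness, not identical distributions---this matters because the paper applies the lemma in the non-uniform sampling setting where $X=\sum_k \mathds{1}\{k\in\mathcal{K}_t\}$ is a sum of independent $\mathrm{Bern}(p_{k,t})$'s with possibly distinct parameters, so ``binomial'' in the lemma's name is a slight misnomer.
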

To further upper bound \eqref{eq:centralProofStep}, we use the following Lemma.
\begin{lemma}
Let $\bar{p}_k=\operatorname{Pr}( \mathcal{E}_{k} | |R - \mu_R | < \beta K)$ and $c$ be some constant that depends on the privacy mechanism, specifically for the Gaussian mechanism we have $c \triangleq  \frac{ 2  L}{\sigma_{\min}}  \sqrt{2 \log \frac{1.25}{\delta_{\ell}} }$, where $L$ is the Lipschitz constant. The following inequality is true when the local mechanism satisfies $(c/\sqrt{\mu_R-\beta K}, \delta_{\ell})$-LDP:
\begin{align}
    \operatorname{Pr}(Y \in \mathcal{S} | |R - \mu_R | < \beta K)  \leq \left[ \bar{p}_k \left( e^{\frac{c}{\sqrt{\mu_R - \beta K}}} -1 \right) + 1 \right]  \operatorname{Pr}(Y_{-k} \in \mathcal{S} | |R - \mu_R | < \beta K) + \bar{p}_k \delta_{\ell}.\nonumber
\end{align}
\label{lemma:conditonalBoundedByDP}
\end{lemma}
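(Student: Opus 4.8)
The plan is to prove the inequality by conditioning on whether user $k$ takes part in the iteration, bounding the two resulting branches separately, and then recombining them. Everything is done conditionally on the event $\mathcal{A}\triangleq\{|R-\mu_R|<\beta K\}$ that the (random) number of participants $R=|\mathcal{K}|$ stays close to its mean $\mu_R$; on $\mathcal{A}$ we have $R>\mu_R-\beta K$, which is the source of the amplification. Writing $\mathcal{E}_k=\{k\in\mathcal{K}\}$ and $\bar{p}_k=\operatorname{Pr}(\mathcal{E}_k\mid\mathcal{A})$, the law of total probability gives
\begin{align}
\operatorname{Pr}(Y\in\mathcal{S}\mid\mathcal{A})=\bar{p}_k\,\operatorname{Pr}(Y\in\mathcal{S}\mid\mathcal{E}_k,\mathcal{A})+(1-\bar{p}_k)\,\operatorname{Pr}(Y\in\mathcal{S}\mid\mathcal{E}_k^{c},\mathcal{A}).\nonumber
\end{align}
So it suffices to establish the two branch bounds
\begin{align}
\operatorname{Pr}(Y\in\mathcal{S}\mid\mathcal{E}_k^{c},\mathcal{A})\le \operatorname{Pr}(Y_{-k}\in\mathcal{S}\mid\mathcal{A}),\qquad \operatorname{Pr}(Y\in\mathcal{S}\mid\mathcal{E}_k,\mathcal{A})\le e^{c/\sqrt{\mu_R-\beta K}}\operatorname{Pr}(Y_{-k}\in\mathcal{S}\mid\mathcal{A})+\delta_\ell,\nonumber
\end{align}
since substituting both into the display above and collecting terms yields exactly the claimed bound, using $\bar{p}_k e^{c/\sqrt{\mu_R-\beta K}}+(1-\bar{p}_k)=\bar{p}_k\big(e^{c/\sqrt{\mu_R-\beta K}}-1\big)+1$.

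The first branch bound is immediate: if $k\notin\mathcal{K}$ then user $k$ transmits $\mathbf{0}$, so the signal received by the PS has exactly the same law as $Y_{-k}$, the signal when user $k$ is absent; and conditioned on $\mathcal{E}_k^{c}$ the total count $R$ coincides with the count over the remaining $K-1$ users, so the conditioning on $\mathcal{A}$ matches on both sides. For the second branch bound, work on $\mathcal{E}_k\cap\mathcal{A}$, so that user $k$ participates together with $R>\mu_R-\beta K$ users in total. Using the channel-alignment choice $h_{j,t}\alpha_{j,t}=1$, the only part of $\mathbf{y}_t$ that depends on user $k$'s data is the clipped gradient $\mathbf{g}_k(\mathbf{w}_t)$ (of norm at most $L$, hence sensitivity at most $2L$), and it is masked by the aggregated noise $\mathbf{z}_t=\sum_{j\in\mathcal{K}}\mathbf{n}_{j,t}+\mathbf{m}_t$, which is zero-mean Gaussian with per-coordinate variance $\sum_{j\in\mathcal{K}}\sigma_{j,t}^2+N_{0}\ge R\,\sigma_{\min}^2\ge(\mu_R-\beta K)\,\sigma_{\min}^2$. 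Hence the effective release of user $k$'s data observed by the PS is a post-processing of a Gaussian mechanism with sensitivity $2L$ and noise standard deviation at least $\sqrt{\mu_R-\beta K}\,\sigma_{\min}$, i.e., it is $\big(c/\sqrt{\mu_R-\beta K},\delta_\ell\big)$-LDP (the hypothesis of the lemma); since $Y_{-k}$ is obtained from the same pipeline after deleting user $k$'s contribution and this deletion is a data-independent post-processing, the post-processing invariance of differential privacy transfers the $\big(c/\sqrt{\mu_R-\beta K},\delta_\ell\big)$ guarantee to the pair $(Y,Y_{-k})$, which is the second branch bound.

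The delicate point — and the main obstacle — is the second branch: one must justify that the comparison of ``user $k$ present and participating'' against ``user $k$ absent'' can be read off from a single Gaussian mechanism at the amplified noise level $\sqrt{\mu_R-\beta K}\,\sigma_{\min}$. This is precisely where the conditioning on $\mathcal{A}$ earns its keep, since without the lower bound $R>\mu_R-\beta K$ (established through Hoeffding's inequality, Lemma~\ref{lemma:Hoeffding}) the number of aggregated Gaussian noise terms, and hence the effective variance, is uncontrolled; it is also where the mild mismatch between the effective noise seen in $Y$ and in $Y_{-k}$ is absorbed into the hypothesis that the local mechanism satisfies $\big(c/\sqrt{\mu_R-\beta K},\delta_\ell\big)$-LDP. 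Once the two branch bounds are in hand, the recombination and the arithmetic simplification are routine.
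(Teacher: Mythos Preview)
Your proposal is correct and follows essentially the same route as the paper: condition on whether user $k$ participates, identify the non-participation branch with $Y_{-k}$, bound the participation branch via the amplified LDP guarantee (using $R>\mu_R-\beta K$ on $\mathcal{A}$), and recombine. The paper carries out the participation branch by an explicit sum over admissible participant sets $A\ni k$ (applying the per-set bound $e^{c/\sqrt{|A|}}$ and then $|A|>\mu_R-\beta K$), whereas you argue more directly; your ``post-processing'' phrasing for passing from $Y$ to $Y_{-k}$ is loose (the latter is not a function of the former), but since you ultimately fall back on the lemma's LDP hypothesis this matches what the paper does by citing~\cite{seif2020wireless}.
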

\vspace{-0.6in}
Using Lemma \ref{lemma:conditonalBoundedByDP}, we can bound \eqref{eq:centralProofStep} as follows:
\begin{align}
     &\operatorname{Pr}(Y \in \mathcal{S}) \nonumber\\
     & \leq \delta' + {\operatorname{Pr}(|R - \mu_R | < \beta K) } \left[ \left[ \bar{p}_k \left( e^{\frac{c}{\sqrt{\mu_R - \beta K}}} -1 \right) + 1 \right]  \operatorname{Pr}(Y_{-k} \in \mathcal{S} | |R - \mu_R | < \beta K) + \bar{p}_k \delta_{\ell} \right] \nonumber \\ 
    & \overset{(a)} \leq \delta' + \bar{p}_k \delta_{\ell} + {\operatorname{Pr}(|R -\mu_R | < \beta K) }  \left[ \bar{p}_k \left( e^{\frac{c}{\sqrt{\mu_R - \beta K}}} -1 \right) + 1 \right] \frac{\operatorname{Pr}(Y_{-k} \in \mathcal{S} )}{\operatorname{Pr}(|R -\mu_R | < \beta K) }   \nonumber \\
    & \overset{(b)} \leq \delta' + \frac{p_k}{1-\delta'} \delta_{\ell} +  \left[ \frac{p_k}{1-\delta'} \left( e^{\frac{c}{\sqrt{\mu_R - \beta K}}} -1 \right) + 1 \right] \operatorname{Pr}(Y_{-k} \in \mathcal{S})
\end{align}
where $(a)$ follows from total probability theorem and the fact that ${\operatorname{Pr}(|R - \mu_R | < \beta K) } \bar{p}_k\delta_{\ell} \leq \bar{p}_k\delta_{\ell}$; and $(b)$ follows from inequality $\bar{p}_k\leq p_k/(1-\delta')$ mentioned at the beginning of the proof. We can obtain a bound for each user $k$ in a similar fashion. By selecting the bound that gives us the largest privacy parameters, we recover the result of Theorem \ref{theorem1_central_nonUniform}.
We next prove Lemma \ref{lemma:conditonalBoundedByDP}.

\begin{proof}[Proof of Lemma \ref{lemma:conditonalBoundedByDP}]

With the $\mathcal{E}_{k}$ defined above,
let $\mathcal{E}_{k}^{c}$ denote its complementary event. Then, using total probability theorem, we have 
\begin{align}
    &\operatorname{Pr}(Y \in \mathcal{S} | |R - \mu_R | < \gamma)\nonumber\\
    & = \bar{p}_k \operatorname{Pr}(Y \in \mathcal{S} | |R - \mu_R | < \gamma,   \mathcal{E}_{k} ) + (1-\bar{p}_k)  \operatorname{Pr}(Y \in \mathcal{S} | |R - \mu_R | < \gamma,   \mathcal{E}_{k}^{c} ) \nonumber\\
     & \overset{(a)} = \bar{p}_k \operatorname{Pr}(Y \in \mathcal{S} | |R - \mu_R | < \gamma,   \mathcal{E}_{k} ) + (1-\bar{p}_k)  \operatorname{Pr}(Y_{-k} \in \mathcal{S} | |R - \mu_R | < \gamma),\label{eq:ProbYinSConditionGamma}
\end{align}
where we can show that $(a)$ is true as follows,
\begin{align}
    & \operatorname{Pr}(Y \in \mathcal{S} | |R - \mu_R | < \gamma,   \mathcal{E}_{k}^{c} )  \nonumber \\ 
       & = \sum_{\substack{A_{-k} \subseteq [K],\\ ||A_{-k}| - \mu_R | < \gamma }} \operatorname{Pr}(\mathcal{K} =A_{-k} | |R - \mu_R| < \gamma, \mathcal{E}_{k}^{c})  \operatorname{Pr} (Y \in \mathcal{S} | |R - \mu_R| < \gamma, \mathcal{E}_{k}^{c}, \mathcal{K} = A_{-k}) \nonumber \\ 
        & \overset{(a)} =  \sum_{\substack{A_{-k} \subseteq [K],\\ ||A_{-k}| - \mu_R | < \gamma }} \operatorname{Pr}(\mathcal{K} =A_{-k} | |R - \mu_R| < \gamma)  \operatorname{Pr} (Y \in \mathcal{S} | |R - \mu_R| < \gamma, \mathcal{E}_{k}^{c}, \mathcal{K} = A_{-k}) \nonumber \\ 
    & \overset{(b)} = \sum_{\substack{A_{-k} \subseteq [K],\\ ||A_{-k}| - \mu_R | < \gamma }} \operatorname{Pr}(\mathcal{K} =A_{-k} | |R - \mu_R| < \gamma)  \operatorname{Pr} (Y_{-k} \in \mathcal{S} | |R - \mu_R| < \gamma, \mathcal{K} = A_{-k}) \nonumber \\ 
      & = \operatorname{Pr} (Y_{-k} \in \mathcal{S} | |R - \mu_R| < \gamma)\label{eq:ProbConditionEkC}
\end{align}
where $(a)$ holds since user $k$ is not in the set $A_{-k}$, therefore, conditioning on the event $\mathcal{E}_{k}^{c}$ does not change the probability; and $(b)$ follows due to similar argument. Next, we upper bound $\operatorname{Pr}(Y \in \mathcal{S} | |R - \mu_R | < \gamma,   \mathcal{E}_{k} )$ as follows:
\begin{align}
    &\operatorname{Pr}(Y \in \mathcal{S} | |R - \mu_R | < \gamma,   \mathcal{E}_{k} )\nonumber\\
    & = \sum_{\substack{A \subseteq [K]: k \in A,\\ ||A| - \mu_R | < \gamma }} \operatorname{Pr}(\mathcal{K} =A  | |R - \mu_R| < \gamma, \mathcal{E}_{k}) \operatorname{Pr} (Y \in \mathcal{S} | |R - \mu_R| < \gamma, \mathcal{E}_{k}, \mathcal{K} = A), \label{conditionalYsample}
\end{align}
Note that, in wireless setting, when each user $k$ applies a mechanism that satisfies $(\epsilon_{\ell}, \delta_{\ell})$-LDP, it implies $(c/\sqrt{|A|}, \delta_{\ell})$-DP \cite{seif2020wireless} (using quasi-convexity property of DP \cite{erlingsson2019amplification}), we have,
\begin{align}
     \hspace{-5pt}\operatorname{Pr} (Y \in \mathcal{S} | |R - \mu_R| < \gamma, \mathcal{E}_{k}, \mathcal{K} = A) \leq e^{\frac{c}{\sqrt{|A|}}} \operatorname{Pr} (Y \in \mathcal{S} | |R - \mu_R| < \gamma, \mathcal{E}_{k}^{c}, \mathcal{K} = A_{-k}) + \delta_{\ell}. \label{localPrivacyLevel}
\end{align}
Plugging \eqref{localPrivacyLevel} into \eqref{conditionalYsample}, we obtain the following:
\begin{align}
      & \operatorname{Pr}(Y \in \mathcal{S} | |R - \mu_R | < \gamma,   \mathcal{E}_{k} )  \nonumber \\ 
       & \leq  \sum_{\substack{A \subseteq [K]: k \in A,\\ ||A| - \mu_R | < \gamma }} \operatorname{Pr}(\mathcal{K} =A | |R - \mu_R| < \gamma, \mathcal{E}_{k}) \left[e^{\frac{c}{\sqrt{|A|}}} \operatorname{Pr} (Y \in \mathcal{S} | |R - \mu_R| < \gamma, \mathcal{E}_{k}^{c}, \mathcal{K} = A_{-k}) + \delta_{\ell} \right] \nonumber \\ 
        & \overset{(a)} =  \sum_{\substack{A \subseteq [K]: k \in A,\\ ||A| - \mu_R | < \gamma }} \operatorname{Pr}(\mathcal{K} = A_{-k} | |R - \mu_R| < \gamma) e^{\frac{c}{\sqrt{|A|}}} \operatorname{Pr} (Y \in \mathcal{S} | |R - \mu_R| < \gamma, \mathcal{E}_{k}^{c}, \mathcal{K} = A_{-k}) + \delta_{\ell} \nonumber \\ 
        & \overset{(b)} =  \sum_{\substack{A \subseteq [K]: k \in A,\\ ||A| - \mu_R | < \gamma }} \operatorname{Pr}(\mathcal{K} = A_{-k} | |R - \mu_R| < \gamma) e^{\frac{c}{\sqrt{|A|}}} \operatorname{Pr} (Y_{-k} \in \mathcal{S} | |R - \mu_R| < \gamma, \mathcal{K} = A_{-k}) + \delta_{\ell} \nonumber \\ 
    & \overset{(c)}     \leq e^{\frac{c}{\sqrt{\mu_R -\gamma}}}  \sum_{\substack{A \subseteq [K]: k \in A,\\ ||A| - \mu_R | < \gamma }} \operatorname{Pr}(\mathcal{K} = A_{-k} | |R - \mu_R| < \gamma)  \operatorname{Pr} (Y_{-k} \in \mathcal{S} | |R - \mu_R| < \gamma, \mathcal{K} = A_{-k}) + \delta_{\ell} \nonumber \\ 
      & = e^{\frac{c}{\sqrt{\mu_R -\gamma}}}   \operatorname{Pr} (Y_{-k} \in \mathcal{S} | |R - \mu_R| < \gamma) + \delta_{\ell} \label{eq:ProbConditionEk}
\end{align}
where $(a)$ 
and $(b)$ follows the similar argument as the one used in \eqref{eq:ProbConditionEkC}. From the condition on the cardinality of the set $R$, we know that $R=|A|$ and $\mu_R - \gamma < |A| < \mu_R + \gamma$. Therefore, $(c)$ follows from using the lower bound on $|A|$.
Then, by combining \eqref{eq:ProbYinSConditionGamma}, \eqref{eq:ProbConditionEkC} and \eqref{eq:ProbConditionEk}, we have
\begin{align}
    &\operatorname{Pr}(Y \in \mathcal{S} | |R - \mu_R | < \gamma) \nonumber\\
    \leq & ~\bar{p}_k e^{\frac{c}{\sqrt{\mu_R -\gamma}}}   \operatorname{Pr} (Y_{-k} \in \mathcal{S} | |R - \mu_R| < \gamma) + \bar{p}_k \delta_{\ell} + (1-\bar{p}_k)  \operatorname{Pr}(Y_{-k} \in \mathcal{S} | |R - \mu_R | < \gamma).
\end{align}
Rearranging the above inequality, we recover the result of Lemma \ref{lemma:conditonalBoundedByDP}.
\end{proof}

\section{Proof of Lemma \ref{lemma:optimalSamplingProb}} \label{optimal_sampling_probability}
In this section, we find the optimal sampling probability $p_t^*$ that minimizes the central privacy level $\epsilon_{c,t}$ for the wireless FedSGD scheme. For the per-iteration analysis, we drop the iteration index for brevity. We minimize $\epsilon_c$ as follows:
\begin{align}
        \epsilon_{c} &= \log \left[ \frac{p}{1-\delta'} \left( e^{\frac{c}{\sqrt{Kp - \beta K}}} -1 \right) + 1 \right] \label{global_privacy}  \leq  \frac{p}{1-\delta'} \left( e^{\frac{c}{\sqrt{Kp - \beta K}}} -1 \right).
\end{align}
We assume that $p$ takes the form of $\tilde{k}/K$, i.e., $p = \frac{\tilde{k}}{K}$, then 
\begin{align}
   \epsilon_{c} \leq \frac{\tilde{k}}{K(1-\delta')} \left( e^{\frac{c}{\sqrt{\tilde{k} - \beta K}}} -1 \right) \triangleq \tilde{\epsilon}_{c}.
\end{align}
Taking the derivative of the right-hand side w.r.t. $\tilde{k}$ and setting it to zero yields the following: 
\begin{align}
    \frac{\partial \tilde{\epsilon}_{c}}{\partial \tilde{k}} = \frac{c}{K (1-\delta')} \left[ - \frac{\tilde{k}}{2} (\tilde{k} - \beta K)^{\frac{-3}{2}}  + (\tilde{k} - \beta K)^{\frac{-1}{2}}\right] = 0 \Rightarrow \tilde{k} = 2 \beta K.
\end{align}
We then check the second derivative of the right-hand side and obtain, 
\begin{align}
        \frac{\partial^{2} \tilde{\epsilon}_{c}}{\partial \tilde{k}^{2}} & \propto 
        \frac{-1}{2} \times \left[- \frac{3}{2} \tilde{k} (\tilde{k} - \beta K)^{-5/2} + 2 (\tilde{k} - \beta K)^{-3/2} \right].
\end{align}
It can be readily shown that  $   \frac{\partial^{2} \tilde{\epsilon}_{c}}{\partial \tilde{k}^{2}} \leq 0$ when $\tilde{k} \geq 2 \beta K$. To this end, the optimal sampling probability that minimize $\epsilon_{c}$ is $p^{*} = 2\beta$.
Using Lemma \ref{lemma:Hoeffding}, 
we know that $\beta = \frac{1}{\sqrt{K}} \sqrt{\frac{1}{2} \log \left(\frac{2}{\delta'}\right)}$.
By plugging $p^*$ and $\beta$ into \eqref{global_privacy},
we get:
\begin{align}
 \epsilon_{c} = \log \left[ \frac{2\sqrt{\frac{1}{2}\log\left(\frac{2}{\delta'}\right)}}{\sqrt{K}(1-\delta')} \big(e^{\frac{c}{\sqrt[4]{K\frac{1}{2}\log\left(\frac{2}{\delta'}\right)}}} -1 \big) + 1\right] = \mathcal{O}\left(\frac{1}{K^{3/4}}\right).
\end{align}
This completes the proof of Lemma \ref{lemma:optimalSamplingProb}.


 \section{Proof of Lemma \ref{lemma1:local_privacy}} \label{appendix::sensitivity_analysis}
The final received signal at the PS from \eqref{eq:output} can be expressed as:
$    \mathbf{y}_{t} =  \sum_{k \in \mathcal{K}_{t}} h_{k,t} \alpha_{k,t} {\mathbf{g}_{k}(\mathbf{w}_{t})} + \mathbf{z}_t$ and the variance of the effective Gaussian noise $ \mathbf{z}_t$ is 
$\sigma^{2} =  \sum_{k \in \mathcal{K}_{t}} h_{k,t}^{2} \alpha_{k,t}^{2} \sigma_{k,t}^{2}  + N_{0}$. In order to invoke the result of the Gaussian mechanism (Appendix \ref{appendix:LDPdefinition}), we next obtain a bound on the sensitivity for user $k$. To bound the local sensitivity of user $k$, we fix the gradients of the remaining ${\mathcal{K}_{t}} \backslash k$ users. The local sensitivity of user $k$ can then be bounded as
\begin{align}
    \Delta_{k,t} 
    & = \max_{\mathcal{D}_{k}, \mathcal{D}'_{k}} 
||  \mathbf{y}_t - \mathbf{y}^{'}_t ||_2 = \max_{\mathcal{D}_{k}, \mathcal{D}'_{k}} 
||  h_{k,t} \alpha_{k,t} (\mathbf{g}_{k}(\mathbf{w}_{t})- \mathbf{g'}_{k}(\mathbf{w}_{t}))||_2 \nonumber\\
& \leq  h_{k,t} \alpha_{k,t}  \max_{\mathcal{D}_{k}, \mathcal{D}'_{k}} 
||  \mathbf{g}_{k}(\mathbf{w}_{t})||_2 + ||  \mathbf{g'}_{k}(\mathbf{w}_{t})||_2 \overset{(a)}\leq 2 h_{k,t} \alpha_{k,t}  L \label{eq:sensitivity} \overset{(b)} = 2 L,
\end{align}
where $(a)$ follows from the fact that $\| {\mathbf{g}_{k}(\mathbf{w}_{t})} \|_{2}   \leq L, \forall k$; and $(b)$ follows from the channel inversion transmission scheme. 
We next show the guarantee on the local DP of user $k$ when user $k$ is a participant. Following similar steps used for proving \eqref{eq:centralProofStep}, it can be shown that,
\begin{align}
       \operatorname{Pr}(Y_x^{(k)} \in \mathcal{S}| \mathcal{E}_{k})  & \leq   \delta' + \delta_{\ell} +  e^{\frac{c}{\sqrt{1 + \mu_R -\beta K}}} \operatorname{Pr}(Y_{x'}^{(k)} \in \mathcal{S}| \mathcal{E}_{k}),\label{eq:localPrivacyConditionK}
\end{align}
where $\mu_{R} = \sum_{i=1, i \neq k}^{K} p_{i}$. Note that \eqref{eq:localPrivacyConditionK} is conditioning on the event when user $k$ participates. We next use the total probability theorem and obtain the following set of steps:
\begin{align}
    \operatorname{Pr}(Y_{x}^{(k)} \in \mathcal{S} ) & = p_{k}   \operatorname{Pr}(Y_{x}^{(k)} \in \mathcal{S} | \mathcal{E}_{k}) + (1-p_{k})  \operatorname{Pr}(Y_{x}^{(k)} \in \mathcal{S} | \mathcal{E}_{k}^{c}) \nonumber \\ 
    & \overset{(a)}{\leq}   p_{k}  e^{\epsilon_{\ell}}  \operatorname{Pr}(Y_{x'}^{(k)} \in \mathcal{S} | \mathcal{E}_{k}) + p_{k} (\delta_{\ell} + \delta') + (1-p_{k}) e^{\epsilon_{\ell}}  \operatorname{Pr}(Y_{x'}^{(k)} \in \mathcal{S} | \mathcal{E}_{k}^{c}) \nonumber\\
    & =    e^{\epsilon_{\ell}}  \operatorname{Pr}(Y_{x'}^{(k)} \in \mathcal{S} ) + p_{k} (\delta_{\ell} + \delta'),
\end{align}
where step $(a)$ follows from \eqref{eq:localPrivacyConditionK} and the fact that when user $k$ is not participating, we have 
\begin{align}
   \operatorname{Pr}(Y_{x}^{(k)} \in \mathcal{S} | \mathcal{E}_{k}^{c}) = e^{0} \operatorname{Pr}(Y_{x'}^{(k)} \in \mathcal{S} | \mathcal{E}_{k}^{c}) \leq e^{\epsilon_{\ell}} \operatorname{Pr}(Y_{x'}^{(k)} \in \mathcal{S} | \mathcal{E}_{k}^{c}), \forall x, x'. 
\end{align}
We arrive at the proof of Lemma \ref{lemma1:local_privacy}.

\section{Proofs of Theorem  \ref{theorem_1_convergence} and Theorem \ref{theorem_1_convergence_known_kt} }\label{appendix:proof_theorem_2}
\vspace{-2pt}
When the data is i.i.d., we can invoke a slightly modified version of the result of \cite{rakhlin2012making} on convergence of SGD for $\mu$-smooth and $\lambda$-strongly convex loss, which states 
\begin{align}
     \mathds{E} \left[ F(\mathbf{w}_{T}) \right] - F(\mathbf{w}^{*}) \leq \frac{2 \mu}{\lambda^{2} T} \left(\sum_{t=1}^{T} G_{t}^{2}/T \right),\label{eq:finalconv2}
\end{align}
where $ G_{t}^{2}$ is the upper bound on the second moment of the gradient estimate, i.e., $\mathds{E}\left[\|\hat{\mathbf{g}}_{t}\|_{2}^{2}\right] \leq G_{t}^{2}$.

\subsection{$\mathcal{K}_t$ is Unknown at the PS}
To prove  the convergence rate of the proposed algorithm, we recall that the gradient estimate at the PS in (\ref{eq:postpreprocessing}) needs to satisfy: (a) Unbiasedness, i.e., $\mathds{E} \left[\hat{\mathbf{g}}_{t} \right] = \mathbf{g}_{t}$, since the total additive noise is zero mean; and (b) Bounded second moment,  $\mathds{E}\left[\|\hat{\mathbf{g}}_{t}\|_{2}^{2}\right] \leq G_{t}^{2}$, which we prove as follows. 
Recall that the estimated gradient at the PS is 
\begin{align}
    \hat{\mathbf{g}}_{t} & = {  \frac{1}{\mu_{|\mathcal{K}_{t}|} }  \sum_{k \in \mathcal{K}_{t}}  {\mathbf{g}_{k}(\mathbf{w}_{t})}} + \frac{1}{ \mu_{|\mathcal{K}_{t}|} }    \mathbf{z}_{t}  =  \frac{1}{\mu_{|\mathcal{K}_{t}|}}   \sum_{k \in \mathcal{K}_{t}} \frac{1}{b}\sum_{i\in\mathcal{B}_k}  \nabla f_{k}(\mathbf{w}_t; (\mathbf{u}_{i}^{(k)}, v_{i}^{(k)}) ) + \frac{1}{ \mu_{|\mathcal{K}_{t}|} }   \mathbf{z}_{t}. \nonumber 
\end{align}
By taking the expectation over the randomness of SGD, user sampling and noise, we have
\begin{align}
    \mathds{E} \left[\hat{\mathbf{g}}_{t} \right] & = \frac{1}{\mu_{|\mathcal{K}_{t}|} b}  \mathds{E} \left[ \sum_{k \in \mathcal{K}_{t}}  \sum_{i\in\mathcal{B}_k} \nabla f_{k}(\mathbf{w}_t; (\mathbf{u}_{i}^{(k)}, v_{i}^{(k)}) ) \right] = \frac{1}{\mu_{|\mathcal{K}_{t}|} b} \mathds{E} \left[ |\mathcal{K}_{t}| \right] b\mathbf{g}_{t} = \mathbf{g}_{t}.
\end{align}
Therefore, the estimated gradient is unbiased. We next obtain the bound on the second moment of the estimated gradient. We have 
\begin{align}
     \mathds{E}\left[\|\hat{\mathbf{g}}_{t}\|_{2}^{2}\right] & = \mathds{E} \left[\Vert \frac{1}{\mu_{|\mathcal{K}_{t}|} } \sum_{k \in \mathcal{K}_{t}} \mathbf{g}_{k}(\mathbf{w}_{t}) +  \frac{\mathbf{z}_{t}}{\mu_{|\mathcal{K}_{t}|}  } \Vert^{2}_{2} \right] \nonumber\\
        & \overset{(a)}= \frac{1}{\mu_{|\mathcal{K}_{t}|}^{2}} \mathds{E} \left[\| \sum_{k \in \mathcal{K}_{t}} \mathbf{g}_{k}(\mathbf{w}_{t})\|^{2}_{2}\right] + \frac{1}{ \mu_{|\mathcal{K}_{t}|}^{2} } \mathds{E}\left[ \| \mathbf{z}_{t}\|^{2}_{2} \right] \nonumber\\
      & = \frac{1}{\mu_{|\mathcal{K}_{t}|}^{2}}  \mathds{E} \left[\sum_{k \in \mathcal{K}_{t}} \|  \mathbf{g}_{k}(\mathbf{w}_{t})\|^{2}_{2} + \sum_{k \in \mathcal{K}_{t}} \sum_{k' \in \mathcal{K}_{t}} \mathbf{g}_{k}(\mathbf{w}_{t})^{T} \mathbf{g}_{k'}(\mathbf{w}_{t})  \right] +  \frac{1}{ \mu_{|\mathcal{K}_{t}|}^{2} } \mathds{E}\left[ \| \mathbf{z}_{t}\|^{2}_{2} \right] \nonumber \\
       & \overset{(b)} \leq  \frac{1}{\mu_{|\mathcal{K}_{t}|}^{2}}  \mathds{E} \left[\sum_{k \in \mathcal{K}_{t}} \|  \mathbf{g}_{k}(\mathbf{w}_{t})\|^{2}_{2} + \sum_{k \in \mathcal{K}_{t}} \sum_{k' \in \mathcal{K}_{t}} \| \mathbf{g}_{k}(\mathbf{w}_{t})\|_{2} \|\mathbf{g}_{k'}(\mathbf{w}_{t})\|_{2}  \right] +  \frac{1}{ \mu_{|\mathcal{K}_{t}|}^{2} } \mathds{E}\left[ \| \mathbf{z}_{t}\|^{2}_{2} \right] \nonumber \\ 
          &\overset{(c)} \leq  \frac{1}{\mu_{|\mathcal{K}_{t}|}^{2}}  \mathds{E} \left[|\mathcal{K}_{t}| L^{2} + |\mathcal{K}_{t}| (|\mathcal{K}_{t}| - 1) L^{2} \right] +  \frac{1}{ \mu_{|\mathcal{K}_{t}|}^{2} } \mathds{E}\left[ \| \mathbf{z}_{t}\|^{2}_{2} \right]  =   \frac{L^{2} \mathds{E} \left[|\mathcal{K}_{t}|^{2} \right]}{\mu_{|\mathcal{K}_{t}|}^{2}} + \frac{1}{ \mu_{|\mathcal{K}_{t}|}^{2}} \mathds{E}\left[ \| \mathbf{z}_{t}\|^{2}_{2} \right] \nonumber\\
          & \leq   \frac{L^{2} \times (\mu_{|\mathcal{K}_{t}|}^{2} + \sigma_{|\mathcal{K}_{t}|}^{2}) }{\mu_{|\mathcal{K}_{t}|}^{2}} + \frac{d}{ \mu_{|\mathcal{K}_{t}|}^{2} } \times \left[   \max_{k} \sigma_{k,t}^{2} \times  \mathds{E}\left[ |\mathcal{K}_{t}|\right] +N_{0}\right]\nonumber\\
            & =  \frac{L^{2} \times (\mu_{|\mathcal{K}_{t}|}^{2} + \sigma_{|\mathcal{K}_{t}|}^{2}) }{\mu_{|\mathcal{K}_{t}|}^{2}} + \frac{d}{\mu_{|\mathcal{K}_{t}|}^{2} } \times \left[   \max_{k} \sigma_{k,t}^{2} \times  \mu_{|\mathcal{K}_{t}|} +N_{0}\right] \triangleq G_{t}^{2}, \label{eq:finalconv}
\end{align}
where (a) follows from the fact that ${ \mathds{E}\left[\mathbf{g}_{t}^{T} \mathbf{z}_{t}\right] = 0 } $,  (b) follows from Cauchy-Schwarz inequality, and (c) from the assumption that $ \| \mathbf{g}_{k}(\mathbf{w}_{t})\|_{2} \leq L$, i.e., the Lipschitz constant $\forall k$. 
Plugging $G_{t}^{2}$ from \eqref{eq:finalconv} in \eqref{eq:finalconv2}, we arrive at the proof of Theorem \ref{theorem_1_convergence}.


\subsection{$\mathcal{K}_t$ is Known at the PS}
We then move to the case when $\mathcal{K}_t$ is known at the PS. Recall that the estimated gradient at the PS for the known $\mathcal{K}_t$ case is
\begin{align}
    \hat{\mathbf{g}}_{t}  = {  \frac{1}{\zeta_t |\mathcal{K}_{t}| } \sum_{k \in \mathcal{K}_{t}}  {\mathbf{g}_{k}(\mathbf{w}_{t})}} + \frac{1}{\zeta_t  |\mathcal{K}_{t}| }  \mathbf{z}_{t}  =  \frac{1}{\zeta_t |\mathcal{K}_{t}|} \sum_{k \in \mathcal{K}_{t}} \frac{1}{b} \sum_{i\in\mathcal{B}_k}  \nabla f_{k}(\mathbf{w}_t; (\mathbf{u}_{i}^{(k)}, v_{i}^{(k)}) ) + \frac{1}{ \zeta_t |\mathcal{K}_{t}| } \mathbf{z}_{t},
\end{align}
where $\zeta_t$ is used for maintaining unbiasedness of the estimated gradient and will be specified later. By taking the expectation over the randomness of SGD, user sampling and additive noise, we have
\begin{align}
    \mathds{E} \left[\hat{\mathbf{g}}_{t} \right] & = \mathds{E} \left[ \frac{1}{\zeta_t |\mathcal{K}_{t}|}  \sum_{k \in \mathcal{K}_{t}} \frac{1}{b} \sum_{i\in\mathcal{B}_k} \nabla f_{k}(\mathbf{w}_t; (\mathbf{u}_{i}^{(k)}, v_{i}^{(k)}) ) \right] \nonumber\\
    & = \mathds{E} \left[ \frac{1}{\zeta_t |\mathcal{K}_{t}|}  |\mathcal{K}_{t}|  \mathbf{g}_{t}\right] = \frac{1}{\zeta_t}\big(1- \prod_{k=1}^{K} (1 - p_{k,t})\big) \mathbf{g}_{t} = \mathbf{g}_{t}.
\end{align}
In order get unbiased estimate for $\mathbf{g}_{t}$, $\zeta_t$ is chosen as $\zeta_{t} = 1- \prod_{k=1}^{K} (1 - p_{k,t}) $. To bound the second moment, the proof follows similar steps as the unknown $\mathcal{K}_t$ case, and is omitted due to space limitation.

\end{document}